                            \newtheorem{theorem}{Theorem}[section]
                            \newtheorem{lemma}[theorem]{Lemma}
                            \newtheorem{remark}[theorem]{Remark}
                            \newtheorem{definition}[theorem]{Definition}
                            \DeclareMathAlphabet{\mathcal}{OMS}{cmsy}{m}{n}
                            \newtheorem*{rep@theorem}{\rep@title} \newcommand{\newreptheorem}[2]{%
                            \newenvironment{rep#1}[1]{%
                            \def\rep@title{\bf #2 \ref{##1} }%
                            \begin{rep@theorem} }%
                            {\end{rep@theorem} } }
                            \newcommand*\samethanks[1][\value{footnote}]{\footnotemark[#1]}
\newcommand{\moves}{ \mathcal{M} } 
\newcommand{\tokens}{ \mathcal{T} } 
\newcommand{\fp}{ F } 
\newcommand{\FP}{ \mathbb{\textbf{F}} } 
\newcommand{\rank}{ {\mathcal{R}} } 
\newcommand{\states}{V_{H}^{|\tokens|}}
\newcommand{\state}{V^{k+1}}
\newcommand{\ins}{\text{in}}
\newcommand{\out}{\text{out}}
\newcommand{\s}{\mathcal{S}}
\newcommand{\costtoken}{\hat{C}} 
\newcommand{\bp}{ B } 
\newcommand{\BP}{ \mathbb{\textbf{B}} } 
\newcommand{\opt}{\text{OPT} } 
\newcommand{\f}{\textbf{f} } 
\newcommand{\bt}{\textbf{b} } 
\newcommand{\gt}{\textsc{Grid Tiling}\xspace}
\newcommand{\scss}{\textsc{Strongly Connected Steiner Subgraph} \xspace}
\newcommand{\blah}{$2$-SCSS-$(k,1)$\xspace}
\newcommand{\blaha}{$2$-SCSS-$(2k-1,1)$\xspace}
\newcommand{\blahgeneral}{$2$-SCSS-$(k_1, k_2)$\xspace}
\newcommand{\sndp}{\textsc{Dir-Cap-SNDP}\xspace}
\begin{document}


                            \title{A Tight Algorithm for Strongly Connected Steiner Subgraph On Two Terminals With Demands\thanks{An extended abstract~\cite{ipec} appeared in IPEC '14}}


                            %
                            %
                            \author{Rajesh Chitnis\thanks{The Weizmann Institute of Science, Rehovot, Israel. Supported by a postdoctoral fellowship from I-CORE ALGO. Work done in part when at the University of Maryland, College Park. Email: \texttt{rajesh.chitnis@weizmann.ac.il}} \and Hossein Esfandiari\thanks{Department of Computer Science, University of Maryland at College Park, USA. Supported in part by NSF CAREER award 1053605, NSF grant CCF-1161626, ONR YIP award N000141110662, DARPA/AFOSR grant FA9550-12-1-0423. Email: \texttt{\{hossein, hajiagha, saeedrez\}@cs.umd.edu}} \and MohammadTaghi Hajiaghayi\samethanks[3] \and Rohit Khandekar\thanks{KCG Holdings Inc., USA. Email: \texttt{rkhandekar@gmail.com}} \and Guy Kortsarz\thanks{ Department of Computer Science, Rutgers University-Camden, USA. Supported by NSF grant 1218620. Email: \texttt{guyk@camden.rutgers.edu}} \and Saeed Seddighin\samethanks[3]}%
                            %



                            %
                            %

                            %
                            %


\maketitle
\thispagestyle{empty}

\begin{abstract}
Given an edge-weighted directed graph $G=(V,E)$ on $n$ vertices and a set $T=\{t_1, t_2, \ldots, t_p\}$ of $p$ terminals, the objective of the \scss ($p$-SCSS)
problem is to find an edge set $H\subseteq E$ of minimum weight such that $G[H]$ contains an $t_{i}\rightarrow t_j$
path for each $1\leq i\neq j\leq p$. The $p$-SCSS problem is NP-hard, but Feldman and Ruhl [FOCS '99; SICOMP '06] gave a novel $n^{O(p)}$ time
algorithm.

In this paper, we investigate the computational complexity of a variant of $2$-SCSS where we have demands for the number of paths between each terminal pair. Formally, the \blahgeneral problem is defined as follows: given an edge-weighted directed graph $G=(V,E)$ with weight function $\omega: E\rightarrow \mathbb{R}^{\geq 0}$, two terminal vertices $s, t$, and integers $k_1, k_2$ ; the objective is to find a set of $k_1$ paths $F_1, F_2, \ldots, F_{k_1}$ from $s\leadsto t$ and $k_2$ paths $B_1, B_2, \ldots, B_{k_2}$ from $t\leadsto s$ such that $\sum_{e\in E} \omega(e)\cdot \phi(e)$ is minimized, where $\phi(e)= \max \Big\{|\{i\in [k_1] : e\in F_i\}|\ ,\ |\{j\in [k_2] : e\in B_j\}|\Big\}$. For each $k\geq 1$, we show the following:
\begin{itemize}
\item The \blah problem can be solved in time $n^{O(k)}$.
\item A matching lower bound for our algorithm: the \blah problem does not have an $f(k)\cdot n^{o(k)}$ time algorithm for any computable function $f$, unless the Exponential Time Hypothesis (ETH) fails.
\end{itemize}
Our algorithm for \blah relies on a structural result regarding an optimal solution followed by using the idea of a ``token game" similar to that of Feldman and Ruhl. We show with an example that the structural result does not hold for the \blahgeneral problem if $\min\{k_1, k_2\}\geq 2$. Therefore \blah is the most general problem one can attempt to solve with our techniques. To obtain the lower bound matching the algorithm, we reduce from a special variant of the \textsc{Grid Tiling} problem introduced by Marx [FOCS '07; ICALP '12].


\end{abstract}

\newpage

\section{Introduction}
\label{sec:intro}

The \textsc{Steiner Tree} (ST) problem is one of the earliest and
most fundamental problems in combinatorial optimization: given an
undirected edge-weighted graph $G=(V,E)$ with edge weights $\omega: E\rightarrow \mathbb{R}^{\geq 0}$ and a set $T\subseteq V$ of terminals, the
objective is to find a tree $S$ of minimum weight $\omega(S):=\sum_{e\in S} \omega(e)$  which spans all the terminals. The \textsc{Steiner Tree} problem is believed to have been
first formally defined by Gauss in a letter in 1836.
In the directed version, called the \textsc{Directed Steiner Tree} (DST) problem, we are also given a root vertex $r$ and the objective is to find a minimum
size arborescence in the directed graph which connects the root $r$ to each terminal from $T$. An easy reduction from \textsc{Set Cover} shows that the DST
problem is also NP-complete.

Steiner-type of problems arise in the design of networks. Since many networks are symmetric, the directed versions of Steiner
type of problems were mostly of theoretical interest. However in recent years, it has been
observed~\cite{ramanathan1996multicast} that the connection cost in various networks such as satellite
or radio networks are not symmetric. Therefore, directed graphs are the most suitable model for such networks. In addition,
Ramanathan~\cite{ramanathan1996multicast} also used the DST problem to find low-cost multicast trees, which have applications
in point-to-multipoint communication in high bandwidth networks.
If we require two-way connectivity, then we obtain a generalization of the DST problem known as the \scss (SCSS) problem. In the $p$-SCSS problem, given a directed graph
$G=(V,E)$ and a set $T=\{t_{1}, t_{2}, \ldots, t_{p}\}$ of $p$ terminals the objective is to find a set $H\subseteq E$ of minimum size such that $G[H]$ contains an $t_{i}\rightarrow t_{j}$ path for each $1\leq i\neq j\leq p$. The SCSS problem is also NP-hard. The best known approximation ratio in polynomial time for SCSS is $|T|^{\epsilon}$ for any $\epsilon>0$ due to Charikar et al.~\cite{DBLP:journals/jal/CharikarCCDGGL99}. A result
of Halperin and Krauthgamer~\cite{DBLP:conf/stoc/HalperinK03} implies SCSS has no $\Omega(\log^{2-\epsilon} n)$-approximation
for any $\epsilon>0$, unless NP has quasi-polynomial Las Vegas algorithms. Regarding exact algorithms, Feldman and Ruhl~\cite{feldman-ruhl} gave a novel $n^{O(p)}$ time algorithm for $p$-SCSS. Chitnis et al.~\cite{soda14} showed that this algorithm is almost tight by the following result: for any computable function $f$, the $p$-SCSS problem has no $f(p)\cdot n^{o(p/\log p)}$ time algorithm unless the Exponential Time Hypothesis (ETH) fails. Chitnis et al.~\cite{soda14} showed that on certain special graph classes such as planar graphs (and more generally $H$-minor-free graphs) one can obtain faster algorithms than that of Feldman and Ruhl: more specifically, if the underlying undirected graph is planar, then $p$-SCSS can be solved in time $2^{O(p\log p)}\cdot n^{O(\sqrt{p})}$. In addition, Chitnis et al.~\cite{soda14} also showed that this algorithm is optimal: for any computable function $f$, the existence of a $f(p)\cdot n^{o(\sqrt{p})}$ time algorithm for $p$-SCSS on planar graphs implies ETH fails.

\vspace{3mm}

\noindent \textbf{The \blahgeneral Problem:} We now define the following generalization of the $2$-SCSS problem:
\begin{center}
\noindent\framebox{\begin{minipage}{6.00in}
\textbf{\blahgeneral}\\
\emph{\underline{Input} }: An edge-weighted digraph $G=(V,E)$ with weight function $\omega: E\rightarrow \mathbb{R}^{\geq 0}$, two terminal vertices $s, t$, and integers $k_1, k_2$ \\
\emph{\underline{Question}}: Find a set of $k_1$ paths $F_1, F_2, \ldots, F_{k_1}$ from $s\leadsto t$ and $k_2$ paths $B_1, B_2, \ldots, B_{k_2}$ from $t\leadsto s$ such that $\sum_{e\in E} \omega(e)\cdot \phi(e)$ is minimized where $\phi(e)= \max \Big\{|\{i\in [k_1] : e\in F_i\}|\ ,\ |\{j\in [k_2] : e\in B_j\}|\Big\}$.

\end{minipage}}
\end{center}
Observe that $2$-SCSS-$(1, 1)$ is the same as the $2$-SCSS problem. The definition of the \blahgeneral problem allows us to potentially choose the same edge multiple times, but we have to pay for each time we use it in a path between a given terminal pair. This can be thought of as \textbf{``buying disjointness"} by adding parallel edges. In large real-world networks, it might be more feasible to modify the network by adding some parallel edges to create disjoint paths than finding disjoint paths in the existing network. Teixeira et al.~\cite{DBLP:conf/sigmetrics/TeixeiraMSV03,DBLP:conf/imc/TeixeiraMSV03} model path diversity in Internet Service Provider (ISP) networks and the Sprint network by disjoint paths between two hosts. There have been several patents~\cite{guo2011hybrid,ramachandran2010wireless} attempting to design multiple paths between the components of Google Data Centers.

The \blahgeneral problem is a special case of the \textsc{Directed Survivable Network Design} (\sndp) problem~\cite{DBLP:conf/soda/GoemansGPSTW94} in which we are given an directed multigraph with weights and capacities on the edges, and the question is to find a minimum weight subset of edges that satisfies all pairwise minimum-cut requirements. In the \blahgeneral problem, we \textbf{do not} require disjoint paths.
As observed in Chakrabarty et al.~\cite{DBLP:conf/ipco/ChakrabartyCKK11} and Goemans et al.~\cite{DBLP:conf/soda/GoemansGPSTW94}, the \sndp problem becomes much easier to approximate if we allow taking multiple copies of each edge.

\subsection{Our Results and Techniques:}
\label{subsec:our-results}

In this paper, we consider the \blah problem parameterized by $k$. Note that the sum of demands is $O(k)$. To the best of our knowledge, we are unaware of any non-trivial exact algorithms for a version of the SCSS problem with demands between the terminal pairs.
Our main algorithmic result is the following:

\begin{theorem}
\label{thm:main-algorithmic} The \blah problem can be solved in $n^{O(k)}$ time, where $n$ is the number of vertices in the input graph.
\end{theorem}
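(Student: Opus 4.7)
The plan is to build on the token-game paradigm of Feldman and Ruhl, but tailored to the asymmetry $k_2=1$ between the numbers of forward and backward paths. First I would establish a structural lemma showing that some optimal solution admits a particularly simple interaction pattern between the $k$ forward paths $F_1,\dots,F_k$ and the unique backward path $B_1$. Concretely, I expect to prove that the optimum can be chosen so that for every vertex $v$ the forward paths through $v$ admit a fixed linear order that they maintain along shared sub-walks, and so that $B_1$ enters and leaves any ``bundle'' of merged forward edges along a contiguous segment. The paper's remark that such a structural result fails as soon as $\min(k_1,k_2)\ge 2$ is a strong hint that the argument will rely crucially on the fact that $B_1$ has no second backward path it must remain separate from, so uncrossing operations never obstruct each other and can always be performed without increasing cost.

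Given the structural lemma, I would define a token game generalizing the Feldman-Ruhl game. Place $k$ forward tokens $\ftoken_1,\dots,\ftoken_k$ on $s$ and one backward token $\btoken$ on $t$; the goal is to move every $\ftoken_i$ to $t$ and $\btoken$ to $s$. A game state is a tuple $(v_1,\dots,v_k,u)\in V^{k+1}$ recording the tokens' positions, and so has size $n^{O(k)}$. The legal moves are the familiar ones: a single token advances along an out-edge (for forward tokens) or in-edge (for the backward token) paying a local cost, several forward tokens sitting at the same vertex may be grouped or ungrouped for free, and the backward token may ``ride along'' a bundle of forward edges at a discounted cost designed to realize exactly the $\phi(e)=\max(\cdot,\cdot)$ accounting rule. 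The structural lemma is precisely what ensures that the optimum of this game coincides with the optimum of \blah.

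The algorithm then becomes a shortest-path computation on the state graph whose vertices are the $n^{O(k)}$ configurations and whose arcs are the legal moves; since each state has at most polynomially many outgoing moves and costs are nonnegative, Dijkstra's algorithm runs in time $n^{O(k)}$, yielding the claimed bound. The main obstacle I anticipate is the structural lemma itself: one has to show both that every optimum can be normalized into the structured form without cost increase, and that the structured form can be executed by the token game with exactly matching cost. The cost-matching step is delicate because a shared forward/backward edge is paid only once under $\phi$, while $\phi$ still sees the full forward multiplicity; the move set and accounting must be arranged so that forward-only edges contribute exactly their multiplicity, backward-only edges contribute once, and shared edges contribute the maximum rather than the sum. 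Once these local accounting identities are checked, correctness of the game and the $n^{O(k)}$ running time follow directly.
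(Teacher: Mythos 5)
Your plan matches the paper's proof in its essentials: prove a structural normalization of some optimum, reduce to a token game with $k$ forward tokens and one backward token on state space $V^{k+1}$, and run Dijkstra on the $n^{O(k)}$-vertex game graph. You also correctly identified why $k_2=1$ is essential for the structural step. The differences are in the details of what the structural lemma and the move set say, and it is worth noting how the paper simplifies both. First, the paper sidesteps any ordering/bundling among the forward paths entirely by the preprocessing step of replacing each edge with $k$ copies, so that forward paths can be taken to be pairwise edge-disjoint; the $\phi(e)=\max$ rule makes this lossless since forward multiplicity is always charged in full. As a result, there is no ``fixed linear order of forward paths through $v$'' to normalize and no ``grouping/ungrouping'' moves. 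Second, the structural lemma (\emph{reverse-compatibility}, Lemma~\ref{hosseinlemma}) is stated per pair $(F_i, B)$: the maximal common sub-paths of $F_i$ and $B$ appear in opposite orders along the two paths. This is proved by a rank-decreasing uncrossing argument that strictly reduces the number of maximal common segments, and edge-disjointness of the $F_i$ guarantees the uncrossing never interferes with other forward paths. Third, in the paper's token game the ``discounted ride along a bundle'' is replaced by a simpler \emph{flip move}: the backward token and one forward token exchange positions, paying once for a shortest path between them. Reverse-compatibility is exactly what lets the reduction serialize the shared segments as a sequence of flips as $\btoken$ travels backward along $B$, with each shared edge billed once. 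So your paradigm is right, but you should adopt the edge-replication trick before attempting the structural lemma (which then becomes a clean per-$i$ statement) and phrase the special move as a position exchange rather than as bundle traversal; otherwise the accounting in your ``cost-matching step'' is harder to close.
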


Our algorithm proceeds as follows: In Section~\ref{subsec:structure} we first show that there is an optimal solution for the \blah problem which satisfies a structural property which we call as \textbf{reverse-compatibility}. Then in Section~\ref{tokengame} we introduce a ``Token Game" (similar to that of Feldman and Ruhl~\cite{feldman-ruhl}), and show that the \textsc{Solving-Token-Game} problem can be solved in $n^{O(k)}$ time. Finally in Section~\ref{reduction}, using the existence of an optimal solution satisfying reverse-compatibility, we give a reduction from the \blah problem to the \textsc{Solving-Token -Game} problem which gives an $n^{O(k)}$ time algorithm for the \blah problem. This algorithm also generalizes the result of Feldman and Ruhl~\cite{feldman-ruhl} for $2$-SCSS, since $2$-SCSS is equivalent to $2$-SCSS-$(1, 1)$. In Section~\ref{app:no-structure}, we show with an example (see Figure~\ref{fig:counter}) that the structural result does not hold for the \blahgeneral problem if $\min\{k_1, k_2\}\geq 2$. Therefore, \blah is the most general problem that one can attempt to solve with our techniques.

Theorem~\ref{thm:main-algorithmic} does not rule out the possibility that the \blah problem is actually solvable in polynomial time. Our main hardness result rules out this possibility by showing that our algorithm is \emph{tight} in the sense that the exponent of $O(k)$ is best possible.
\begin{theorem}
\label{thm:main-hardness}
The \blah problem is W[1]-hard parameterized by $k$. Moreover, under the Exponential Time Hypothesis (ETH) of Impagliazzo and Paturi~\cite{eth}, the \blah problem cannot be solved in $f(k)\cdot n^{o(k)}$ time for any computable function $f$ where $n$ is the number of vertices in the graph.
\end{theorem}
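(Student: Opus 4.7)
The plan is to reduce in polynomial time from a $k\times k$ instance of Marx's special variant of \gt: given sets $S_{i,j}\subseteq [n]\times [n]$ for $(i,j)\in [k]\times [k]$, pick $(a_{i,j},b_{i,j})\in S_{i,j}$ satisfying $a_{i,j}=a_{i,j+1}$ (row consistency) and $b_{i,j}=b_{i+1,j}$ (column consistency). Since this variant is W[1]-hard and admits no $f(k)\cdot n^{o(k)}$ algorithm under ETH, a reduction into \blah with parameter $k'=O(k)$ on a graph of size $\mathrm{poly}(n,k)$ immediately yields both halves of Theorem~\ref{thm:main-hardness}.

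The constructed digraph $G$ is laid out as a $k\times k$ array of cell gadgets between a source $s$ and a sink $t$. The $k$ forward paths $F_1,\dots,F_k$ are forced to traverse the $k$ distinct rows from left to right, and inside each cell $(i,j)$ the path $F_i$ selects an edge encoding a pair of $S_{i,j}$; row consistency on the first coordinate is hard-wired into $F_i$ by making the exit-port of $(i,j)$ agree with the entry-port of $(i,j+1)$ only when both carry the same first-coordinate label, so any $F_i$ violating row consistency can only be completed through prohibitively expensive correction edges. The single backward path $B_1$ snakes through all $k^2$ cells in a prescribed column-by-column order (for instance column $k$ bottom-up, then column $k-1$ top-down, etc.), selecting a second-coordinate value in each cell, with column consistency baked into the inter-cell connectors so that $B_1$ must carry the same value through an entire column. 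Synchronization between $F_i$ and $B_1$ inside cell $(i,j)$ exploits the $\max$-based cost function $\phi$: the gadget contains $|S_{i,j}|$ parallel ``pair-edges'', one per element of $S_{i,j}$, and weights are tuned so that the cheapest joint traversal is the one where $F_i$ and $B_1$ share the \emph{same} pair-edge (contributing $\max(1,1)=1$) rather than using distinct edges (contributing $1+1=2$). Setting the budget $W$ to the cost of a fully-consistent joint routing, a feasible solution of cost $\le W$ forces, in every cell, that $F_i$ and $B_1$ agree on a pair of $S_{i,j}$, which is exactly a solution to the \gt instance; conversely any \gt solution lifts to a feasible \blah solution of cost exactly $W$.

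The main obstacle will be designing the cell gadget (and inter-cell connectors) to enforce simultaneously that (i) the $k$ forward paths occupy the $k$ distinct rows in the correct left-to-right orientation rather than piling up or skipping cells, (ii) $B_1$ follows the prescribed snake rather than short-cutting between columns, and (iii) forward/backward paths cannot ``cheat'' by sharing edges across different cells to disguise an inconsistent pair choice. The standard remedy is a combination of heavy ``gate'' edges that must be crossed exactly once per cell (with carefully placed in/out ports so that only the intended row or column can use them cheaply) together with cheap per-cell encoding edges. With such a gadget in place, the total cost decomposes into per-cell contributions plus row/column interface terms, each minimized precisely when the corresponding consistency constraint holds, and since the reduction uses $k$ forward paths on a graph of size $\mathrm{poly}(n,k)$, Marx's $n^{\Omega(k)}$ ETH lower bound and his W[1]-hardness transfer directly to \blah.
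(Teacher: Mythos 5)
Your high-level strategy (source is Grid Tiling, build a $k\times k$ array of cell gadgets, exploit the $\max$-based cost function so that a forward and a backward path sharing an edge pays only once, and set a tight budget) is indeed the same skeleton the paper uses. But the construction you sketch differs from the paper's in several ways that are not cosmetic, and precisely in the places you leave blank.

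First, the paper does \emph{not} reduce to $2$-SCSS-$(k,1)$ but to $2$-SCSS-$(2k-1,1)$: the backward path traverses the $k$ gadget rows left-to-right, moving from row $i$ to row $i-1$ through a very heavy ``connector'' edge of weight $W$; once the backward path has paid for those $k-1$ connectors, $k-1$ of the forward paths ride them for free (a free $s\to e_i\to f_i\to t$ path), and the remaining $k$ ``expensive'' forward paths are the ones forced to run vertically through the columns. This $2k-1$-vs-$k$ asymmetry is exactly what lets the paper prove (their Lemma~\ref{lem:connector-edges}) that \emph{at least $k$} forward paths must avoid the connectors and hence must be column-canonical. In your version with only $k$ forward paths and a snaking backward path, you have no mechanism analogous to this; you assert the $k$ forward paths ``are forced'' into $k$ distinct rows, but you do not supply an argument, and the natural ``cheat'' (two forward paths piling onto a single cheap row, leaving another row unused) is precisely the case your budget analysis would need to exclude quantitatively. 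You explicitly defer this to ``the main obstacle,'' which is a genuine gap, not a detail.

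Second, the snaking backward path is more problematic than you acknowledge. The paper keeps every gadget edge oriented ``down and right''; a single path can then traverse one row (or one column) but cannot snake. Your column-by-column snake requires alternate columns to carry edges pointing upward, which immediately raises the question of why a forward path (which is supposed to go strictly left-to-right through one row) cannot itself snake through those upward edges to skip or revisit cells. The paper avoids this entirely by routing the backward path's level changes \emph{outside} the grid via the connector edges $(e_i,f_i)$, so the internal grid stays a DAG. Third, the synchronization mechanism differs: you propose ``pair-edges'' shared by $F_i$ and $B_1$, whereas the paper attaches green/orange \emph{shortcut} edges at the intersections of a horizontal and a vertical canonical path, so that the horizontal (backward) path can detour onto an already-used vertical edge and save weight $2$ (green, diagonal gadgets) or $1$ (orange, off-diagonal). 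Fourth, the paper's reduction essentially relies on the \textsc{Grid Tiling}* normalization (diagonal sets $S_{i,i}=\{(j,j)\}$) so that green vertices sit only on the gadget diagonal; this is what ultimately forces $\gamma_i=\alpha_i$ in the soundness direction (Theorem~\ref{thm:dsf-redn-hard}). You cite only the plain Grid Tiling constraints and do not replicate this structure.

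In short, what you have is a plausible plan, but the hard part of the proof --- the gadget design and the quantitative budget argument showing that any solution within budget is ``canonical'' --- is exactly what is omitted, and the paper's choices ($2k-1$ forward paths, external heavy connectors rather than an internal snake, shortcut edges at specially placed green/orange vertices, and the \textsc{Grid Tiling}* diagonal normalization) are all targeted fixes for the obstacles you name without resolving. As written, the proposal does not constitute a proof.
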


To prove Theorem~\ref{thm:main-hardness}, we reduce from the \gt problem formulated in the pioneering work of Marx ~\cite{daniel-grid-tiling}:
\begin{center}
\noindent\framebox{\begin{minipage}{4.50in}
\textbf{\textsc{$k\times k$ Grid Tiling}}\\
\emph{Input }: Integers $k, n$, and $k^2$ non-empty sets $S_{i,j}\subseteq [n]\times [n]$ where $i, j\in [k]$\\
\emph{Question}: For each $1\leq i, j\leq k$ does there exist a value $s_{i,j}\in S_{i,j}$ such that
\begin{itemize}
\item If $s_{i,j}=(x,y)$ and $s_{i,j+1}=(x',y')$ then $x=x'$.
\item If $s_{i,j}=(x,y)$ and $s_{i+1,j}=(x',y')$ then $y=y'$.
\end{itemize}
\end{minipage}}
\end{center}
The \gt problem has turned to be a convenient starting point for parameterized reductions for planar problems, and has been used recently in various W[1]-hardness proofs on planar graphs~\cite{soda14,DBLP:conf/icalp/Marx12,michal-stacs,daniel-voronoi}. Under the ETH, Chen et al.~\cite{chen-hardness} showed that $k$-\textsc{Clique}\footnote{The $k$-\textsc{Clique} problem asks whether there is a clique of size $\geq k$?} does not admit an algorithm running in time $f(k)\cdot n^{o(k)}$ for any computable function $f$. Marx~\cite{daniel-grid-tiling} gave a reduction from $k$-\textsc{Clique} to $k\times k$ \gt. In Section~\ref{sec:hardness}, we give a reduction from $k\times k$ \gt to \blaha. Since the parameter blowup is linear, the $f(k)\cdot n^{o(k)}$ lower bound for \gt from~\cite{daniel-grid-tiling} transfers to \blah.

Before proceeding further, we show that the edge-weighted and the vertex-weighted variants of \blahgeneral are computationally equivalent. First we define the vertex-weighted variant of \blahgeneral.

\begin{center}
\noindent\framebox{\begin{minipage}{6.00in}
\textbf{Vertex-weighted \blahgeneral}\\
\emph{\underline{Input} }: A vertex-weighted digraph $G=(V,E)$ with weight function $\omega': V\rightarrow \mathbb{R}^{\geq 0}$, two terminal vertices $s, t$, and integers $k_1, k_2$ \\
\emph{\underline{Question}}: Find a set of $k_1$ paths $F_1, F_2, \ldots, F_{k_1}$ from $s\leadsto t$ and $k_2$ paths $B_1, B_2, \ldots, B_{k_2}$ from $t\leadsto s$ such that $\sum_{v\in V\setminus \{s,t\}} \omega'(v)\cdot \phi'(v)$ is minimized where $\phi'(v)= \max \Big\{|\{i\in [k_1] : v\in F_i\}|\ ,\ |\{j\in [k_2] : v\in B_j\}|\Big\}$.

\end{minipage}}
\end{center}

\begin{lemma} \label{limu}
The edge-weighted \blahgeneral and the vertex-weighted \blahgeneral are equivalent.
\end{lemma}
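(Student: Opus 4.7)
The plan is to prove equivalence by giving cost-preserving reductions in both directions via two classical gadgets: \textbf{edge subdivision} (for edge-weighted $\to$ vertex-weighted) and \textbf{vertex splitting} (for vertex-weighted $\to$ edge-weighted). In both cases the challenge is only to verify that the $\phi$ / $\phi'$ congestion functions agree on corresponding objects, since the geometry of paths is obviously preserved.

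First I would handle the direction edge-weighted $\Rightarrow$ vertex-weighted. Given an instance $(G,\omega,s,t,k_1,k_2)$, build $G'$ by subdividing every edge $e=(u,v)\in E$ with a fresh vertex $x_e$, i.e.\ replacing $e$ with the two arcs $(u,x_e),(x_e,v)$. Set $\omega'(x_e):=\omega(e)$ for each subdivision vertex and $\omega'(v):=0$ for each original vertex (including $s,t$). A forward $s\leadsto t$ path $F_i$ in $G$ lifts uniquely to a path $F_i'$ in $G'$ that visits $x_e$ exactly when $F_i$ uses $e$, and symmetrically for backward paths. Hence $|\{i:x_e\in F_i'\}|=|\{i:e\in F_i\}|$ and the analogous identity for the $B_j$'s, so $\phi'(x_e)=\phi(e)$; since all other vertices carry weight $0$, the two objective values are identical, and therefore the optimum values are equal.

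For the reverse direction, given $(G,\omega',s,t,k_1,k_2)$, construct $G''$ by splitting every non-terminal vertex $v$ into $v_{\ins}$ and $v_{\out}$: redirect every arc $(u,v)\in E(G)$ to $(u_{\out},v_{\ins})$ and every arc $(v,w)\in E(G)$ to $(v_{\out},w_{\ins})$, all with weight $0$, and add the internal arc $e_v=(v_{\ins},v_{\out})$ with weight $\omega(e_v):=\omega'(v)$. (Since $\omega'(s)=\omega'(t)=0$ in any reasonable reading, we may either split them too with weight $0$, or leave them as single vertices and redirect incident arcs accordingly.) Any forward path $F_i$ in $G$ using $v$ corresponds to a path $F_i''$ in $G''$ that traverses $e_v$ exactly once, and $F_i''$ uses no other positively weighted edge; the same holds for the $B_j$'s. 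Therefore $|\{i:e_v\in F_i''\}|=|\{i:v\in F_i\}|$ and similarly for the backward family, so $\phi(e_v)=\phi'(v)$ and the two objective values coincide.

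The only thing to be careful about, and the main (mild) obstacle, is that $\phi$ and $\phi'$ are \emph{max}-type congestion functions rather than additive ones; one must check that the reductions do not introduce ``shared'' use where the original had none, which is exactly what the local correspondences above guarantee (each gadget replaces one object with one object of the same multiplicity). Combining both reductions yields the stated equivalence between edge-weighted and vertex-weighted \blahgeneral.
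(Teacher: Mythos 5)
Your proposal is correct and uses essentially the same two gadgets as the paper: edge subdivision (with the subdivision vertex carrying the original edge weight) for the edge-weighted $\to$ vertex-weighted direction, and vertex splitting into an in-copy and out-copy joined by an internal arc carrying the original vertex weight for the reverse direction, with identical verification that the congestion values $\phi$ and $\phi'$ coincide on the corresponding objects.
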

\begin{proof}
First, we show that the edge-weighted version can be solved using the vertex-weighted version. Let $G=(V,E)$ be an edge-weighted graph with weight function $\omega$. We create a vertex-weighted graph $G'=(V',E')$ with weight function $\omega'$ as follows: subdivide each edge $(u,v)$ of $G$ by adding a new vertex $\beta_{u,v}$ to get a path $u\rightarrow \beta_{u,v}\rightarrow v$ of length two. Let $V'=V\cup \{\beta_{u,v} : (u,v)\in E\}$. Set $\omega'(v)=0$ for each $v\in V$ and $\omega'(\beta_{u,v})=\omega(u,v)$ for each edge $(u,v)\in E$. Consider any solution $H\subseteq E$ of edge-weighted \blahgeneral. Consider the solution $H'$ obtained by including (the subdivision) of each edge in $H$. The weights of all vertices from $V$ is zero in $G'$. Also, for any edge $e=(u,v)$ it is easy to see that $\phi(e)=\phi'(\beta_{u,v})$, and hence both solutions $H$ and $H'$ have same cost.

Next, we show that the vertex-weighted version can be solved using the edge-weighted version. Let $G'=(V',E')$ be a vertex-weighted graph with weight function $\omega'$. We create an edge-weighted graph $G=(V,E)$ with weight function $\omega$ as follows: Replace each vertex $v\in V\setminus \{s,t\}$  with a pair of vertices $(v_{\ins},v_{\out})$. Let $s_{\ins}=s=s_{\out}$ and $t_{\ins}=t=t_{\out}$. Make all in-neighbors, out-neighbors of $v$ in $G$ incident to $v_{\ins}, v_{\out}$ respectively and add an edge $(v_{\ins}, v_{\out})$. Set $\omega(v_{\ins}, v_{\out})=\omega'(v)$ for all $v\in V'\setminus \{s,t\}$, and weight of all other edges to be zero. Consider any solution $H'\subseteq V'$ of vertex-weighted \blahgeneral. Consider the solution $H$ obtained as follows: for each $s\leadsto t$ path in $H'$ say $s=x_1\rightarrow x_2\rightarrow x_3\rightarrow \ldots \rightarrow x_{r-1}\rightarrow x_{r}=t$ we add the path $x_1\rightarrow x_{2,\ins}\rightarrow x_{2,\out}\rightarrow x_{3,\ins}\rightarrow x_{3,\out}\rightarrow \ldots \rightarrow x_{r-1,\ins}\rightarrow x_{r-1,\out}\rightarrow x_{r}$. Similarly for the $t\leadsto s$ path. Also it is easy to see that for any $v\in V'\setminus \{s,t\}$ we have $\phi(v_{\ins}, v_{\out})=\phi'(v)$, and hence both solutions $H$ and $H'$ have same cost.

\end{proof}
Henceforth we consider only the edge-weighted version of \blahgeneral.

\subsection{Notation}
The set $\{1,2,\ldots, n\}$ is denoted by $[n]$. We denote a directed edge from $u$ to $v$ by $(u,v)$ or $u\rightarrow v$. A directed path from $u$ to $v$ is denoted by $u\leadsto v$. Given a directed graph $G=(V,E)$ the in-degree of $v$ is the number of in-neighbors of $v$ and is denoted by $d^{-}_{G}(v)= |\{w : (w,v)\in E\}|$. Similarly, the out-degree of $v$ is the number of out-neighbors of $v$ and is denoted by $d^{+}_{G}(v)= |\{x : (v,x)\in E\}|$. Given a set $S$ and an integer $r$, the set $S^r$ denotes the set of all $r$-element vectors which have each coordinate from $S$, i.e., $S^{r}= \{(s_1, s_2, \ldots, s_r): s_i\in S\ \forall i\in [r]\}$. Similarly, for $s\in S$ we use $s^r$ to denote the vector $(s_1,s_2,\ldots,s_r)$ where $s_i=s$ for each $i\in [r]$.

\section{An $\pmb{n^{O(k)}}$ algorithm for \pmb{\blah}}
\label{sec:algorithm}

In this section we describe an algorithm for the \blah problem running in $n^{O(k)}$ time where $n$ is the number of vertices in the graph.
First in Section~\ref{subsec:structure} we present a structural property called as \emph{reverse compatibility} for some optimal solution of this problem. Next we define a Token Game in Section~\ref{tokengame} and describe an $n^{O(k)}$ time algorithm for the \textsc{Solving-Token-Game} problem. Finally, in Subsection~\ref{reduction} we present an algorithm that finds the optimum solution of \blah in time $n^{O(k)}$ via a reduction to the \textsc{Solving-Token Game} problem.

\subsection{Structural Lemma for Some Optimal Solution of \pmb{\blah}}
\label{subsec:structure}
\begin{remark}
\label{remark:disjointness-of-forward-paths}
For simplicity, we replace each edge $e$ of the input graph $G$ with $k$ copies $e_1, e_2, \ldots, e_k$, each having the same weight as that of $e$. Let the new graph constructed in this way be $G'$. In $G'$, different $s\leadsto t$ paths must pay each time they use different copies of the same edge. We can alternately view this as the $s\leadsto t$ paths in $G'$ being \textbf{edge-disjoint}.
\end{remark}

\begin{definition}
\label{defn:path-reverse-compatible} \textbf{\emph{(path-reverse-compatible)}}
Let $\fp$ be an $s\leadsto t$ path and $\bp$ be an $t\leadsto s$ path. Let $\{P_1, P_2,\dots,P_d\}$ be the set of maximal sub-paths that $\fp$ and $\bp$ share and for all $j\in [d]$, $P_j$ is the $j$-th sub-path as seen while traversing $\fp$. We say the pair $(\fp,\bp)$ is \emph{path-reverse-compatible} if for all $j\in [d]$, $P_j$ is the $(d-j+1)$-th sub-path that is seen while traversing $\bp$, i.e., $P_j$ is the $j$-th sub-path that is seen while traversing $\bp$ backward.
\end{definition}

\begin{figure}[t]
\centering
\includegraphics[width=3.5in]{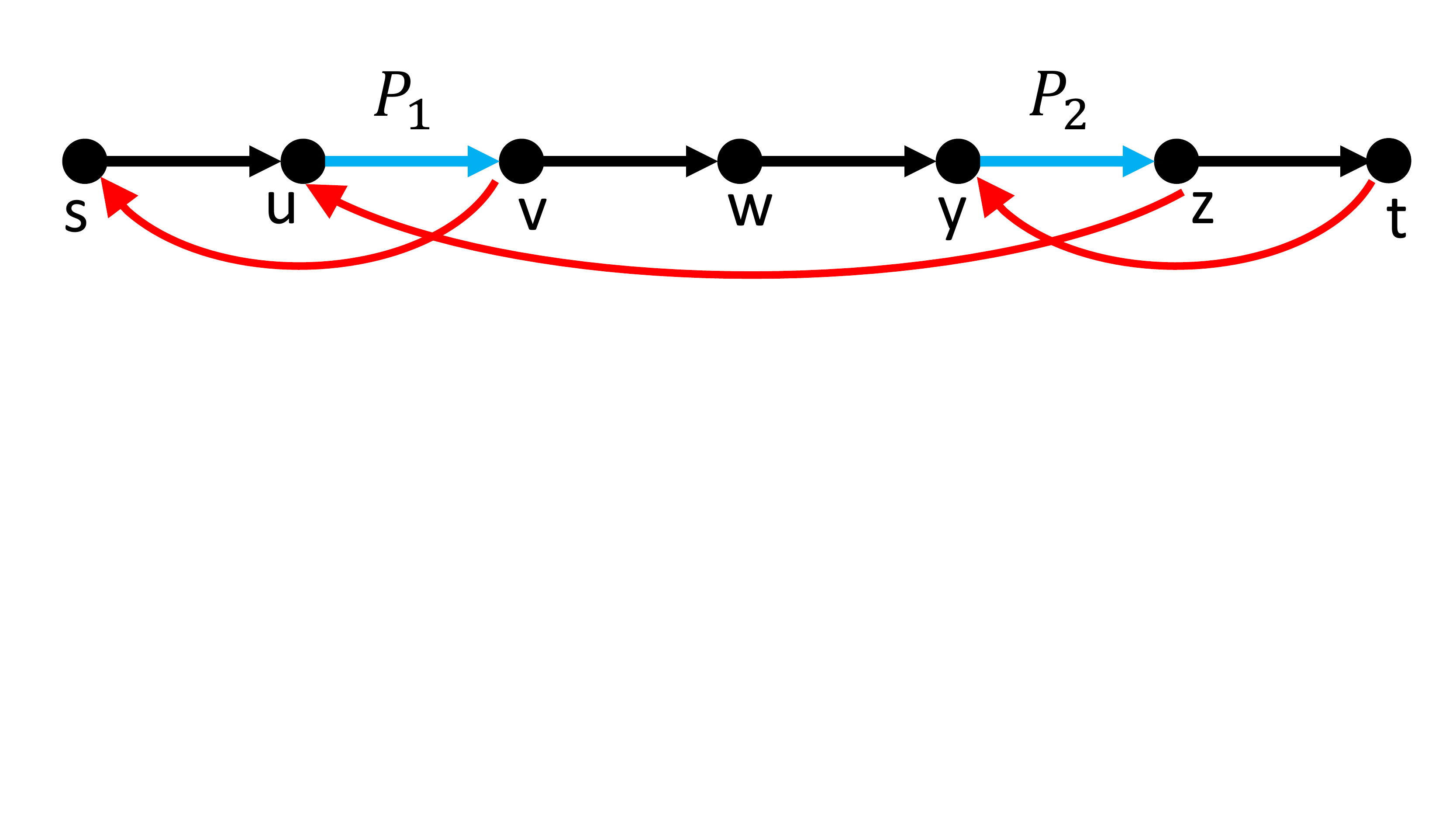}
\vspace{-30mm}
\caption{Let $F$ be an $s\leadsto t$ path given by $s\rightarrow u\rightarrow v\rightarrow w\rightarrow y\rightarrow z\rightarrow t$ and $B$ be an $t\leadsto s$ path given by $t\rightarrow y\rightarrow z\rightarrow u\rightarrow v\rightarrow s$. The two paths $P_1$ and $P_2$ shown in blue are the maximal common sub-paths between $F$ and $B$. From Definition~\ref{defn:path-reverse-compatible}, it follows that $F$ and $B$ are \emph{path-reverse-compatible} since $B$ first sees $P_2$ and then $P_1$. \label{fig:path-new}}
\end{figure}

See Figure~\ref{fig:path-new} for an illustration of path-reverse-compatibility.

\begin{definition}
\label{defn:reverse-compatible} \textbf{\emph{(reverse-compatible)}}
Let $\FP=\{\fp_1,\fp_2,\dots,\fp_r\}$ be a set of $s\leadsto t$ paths and $b$ be an $t\leadsto s$ path. We say $(\FP,\bp)$ is reverse-compatible, if for all $1\leq i \leq r$ the pair $(\fp_i,\bp)$ is path-reverse-compatible.
\end{definition}

The next lemma shows that there exists an optimum solution for \blah which is reverse-compatible.

\begin{lemma}
\label{hosseinlemma} \textbf{\emph{(structural lemma)}}
There exists an optimum solution for \blah which is reverse-compatible.
\end{lemma}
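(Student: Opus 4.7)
The plan is an exchange argument that takes an arbitrary optimum and iteratively removes ``inversions'' between the ordering of shared sub-paths in $F_i$ and in $B$, never increasing the cost. Fix an optimum $(F_1,\dots,F_k,B)$ and, for each $i$, let $P^{(i)}_1,\dots,P^{(i)}_{d_i}$ be the maximal common sub-paths of $(F_i,B)$ listed in $F_i$'s order. Call $(j,j')$ an \emph{inversion} of $(F_i,B)$ if $j<j'$ yet $B$ visits $P^{(i)}_j$ before $P^{(i)}_{j'}$; reverse-compatibility is exactly the absence of inversions. Among all optima I would select one minimizing the combined potential $\Phi=(\text{total number of inversions summed over }i,\ \sum_i d_i)$ lexicographically, and aim for a contradiction from the assumption $\Phi>(0,\cdot)$.

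Next, I would set up the local modification. Pick an $(F_i,B)$ with an inversion and choose indices $j<j'$ in $F_i$'s order that are \emph{adjacent in $B$'s order}, meaning $B$ visits $P^{(i)}_j$ immediately before $P^{(i)}_{j'}$ among the shared sub-paths of this particular pair. Let $\alpha$ be the portion of $F_i$ from the end of $P^{(i)}_j$ to the start of $P^{(i)}_{j'}$ and $\beta$ the corresponding portion of $B$; both are directed paths with the same endpoints $v_{j,l_j}$ and $v_{j',1}$. The adjacency choice forces $\beta$ to contain no edge of any $P^{(i)}_m$, and hence $\beta\cap F_i=\emptyset$. The modification is: leave every $F_l$ untouched (including $F_i$) and replace $\beta$ inside $B$ by $\alpha$ to obtain a $t\leadsto s$ walk, then shortcut at repeated vertices to recover a simple path $B'$.

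The analysis has two parts. For the cost: each $e\in\beta$ dropped from $B$ can only decrease $\phi(e)=\max(f(e),b(e))$; each $e\in\alpha$ added to $B'$ lies in $F_i$, so $f(e)\ge 1$ and thus $\phi(e)=f(e)$ both before and after the change, regardless of whether $e$ was already in $B$; and shortcuts only delete edges, which cannot raise any $\phi(e)$. Therefore the new solution has cost no larger than the original, hence equal by optimality. For progress, the pieces $P^{(i)}_j,\alpha,P^{(i)}_{j'}$ together with every $P^{(i)}_m$ contained in $\alpha$ fuse into a single maximal shared sub-path of $(F_i,B')$, so the chosen inversion disappears from $(F_i,B')$ and no new inversion is created inside this pair. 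The hard part, where I expect the bulk of the work to lie, is controlling the effect on $(F_l,B')$ for $l\ne i$: edges of $F_l\cap\beta$ leave $B'$ while edges of $F_l\cap\alpha$ may appear as new shared pieces with $B'$. Exploiting the secondary term $\sum_i d_i$ in $\Phi$ and the observation that shortcuts only merge (and never split) shared pieces of $(F_l,B')$, I expect to show that $\Phi$ strictly decreases overall, contradicting the minimality of $\Phi$ and thereby proving the existence of a reverse-compatible optimum.
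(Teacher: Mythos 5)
Your exchange is essentially the paper's, and the ``hard part'' you flag---the effect on $(F_l,B')$ for $l\neq i$---is real; but the paper eliminates it with one preliminary move that your sketch omits. Before arguing anything, the paper replaces each edge of $G$ by $k$ parallel copies of the same weight; since $\phi(e)$ only counts how many of the $F_i$ use a given copy, this changes nothing, and one may then take the forward paths $F_1,\dots,F_k$ to be pairwise edge-disjoint. With that normalization, $\alpha\subseteq F_i$ is disjoint from every $F_l$ with $l\neq i$, so splicing $\alpha$ into $B$ creates no new shared piece between $B'$ and any $F_l$, while deleting $\beta$ can only destroy shared pieces. Hence each $d_l$ ($l\neq i$) does not increase, $d_i$ strictly decreases, and the single potential $\sum_i d_i$ (your secondary term, the paper's ``rank'') already goes down---no inversion count, no lexicographic refinement, and no adjacency-in-$B$ choice of $j,j'$ are needed.

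Without the parallel-edge normalization, the step you ``expect to show'' does not obviously hold: when $F_l\cap\alpha\neq\emptyset$, the substitution can manufacture new shared pieces of $(F_l,B')$, raising $d_l$ and potentially introducing fresh inversions in $(F_l,B')$; the gains in $(F_i,B)$ need not dominate, so both components of your $\Phi$ can increase. This is a genuine gap, not a routine verification to be deferred. Importing the edge-duplication trick collapses your argument to the paper's and lets you discard the inversion count entirely.
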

\begin{proof}
In order to prove this lemma, we first introduce the notion of rank of a solution for \blah. Later, we show that an optimum solution of \blah with the minimum rank is reverse-compatible.

\begin{definition} \textbf{\emph{(rank)}}
Let $\FP=\{\fp_1,\fp_2,\dots,\fp_k\}$ be a set of paths from $s\leadsto t$, and $\bp$ be a path from $t\leadsto s$. For each $i\in [k]$, let $d_i$ be the number of maximal sub-paths that $\bp$ and $\fp_i$ share. The \emph{rank} of $(\FP,\bp)$ is given by
$$\rank(\FP,\bp) = \sum_{i=1}^{k} d_i$$
\label{defn:rank}
\end{definition}
Let $(\FP,\bp)$ be an optimum solution of \blah with the minimum rank. Assume for the sake of contradiction that $(\FP,\bp)$ is not reverse-compatible, i.e., there exists some $\fp_i \in \FP$ such that $(\fp_i ,\bp)$ is not path-reverse-compatible. From Definition~\ref{defn:path-reverse-compatible}, this means that $\fp_i$ and $\bp$ share two maximal sub-paths $u\leadsto v$ and $x \leadsto y$, and at the same time $\fp_i$ and $\bp$ both contain $u\leadsto y$ sub-paths (see Figure~\ref{fig:lemma}).

We replace the $u\leadsto y$ sub-path of $\bp$ by the $u\leadsto y$ sub-path of $\fp_i$.
On one hand, $\bp$ shares all of the $u \leadsto y$ sub-path with $\fp_i$. Thus, this change does not increase the weight of the network, therefore it remains an optimum solution. On the other hand, by this change, the sub-paths $u\leadsto v$ and $x\leadsto y$ join. Hence, $d_i$ decreases by 1. Also, since the $s\leadsto t$ paths are edge-disjoint, after the change all other $d_{j}$'s remain same (for $i\neq j$) since $\bp$ shares the whole $u\leadsto y$ sub-path with only $\fp_i$. Therefore, this change strictly decreases the rank of the solution. Existence of an optimum solution with a smaller rank contradicts the selection of $(\FP,\bp)$ and completes the proof.
\end{proof}

\begin{figure}[t]
\centering
\includegraphics[width=3.5in]{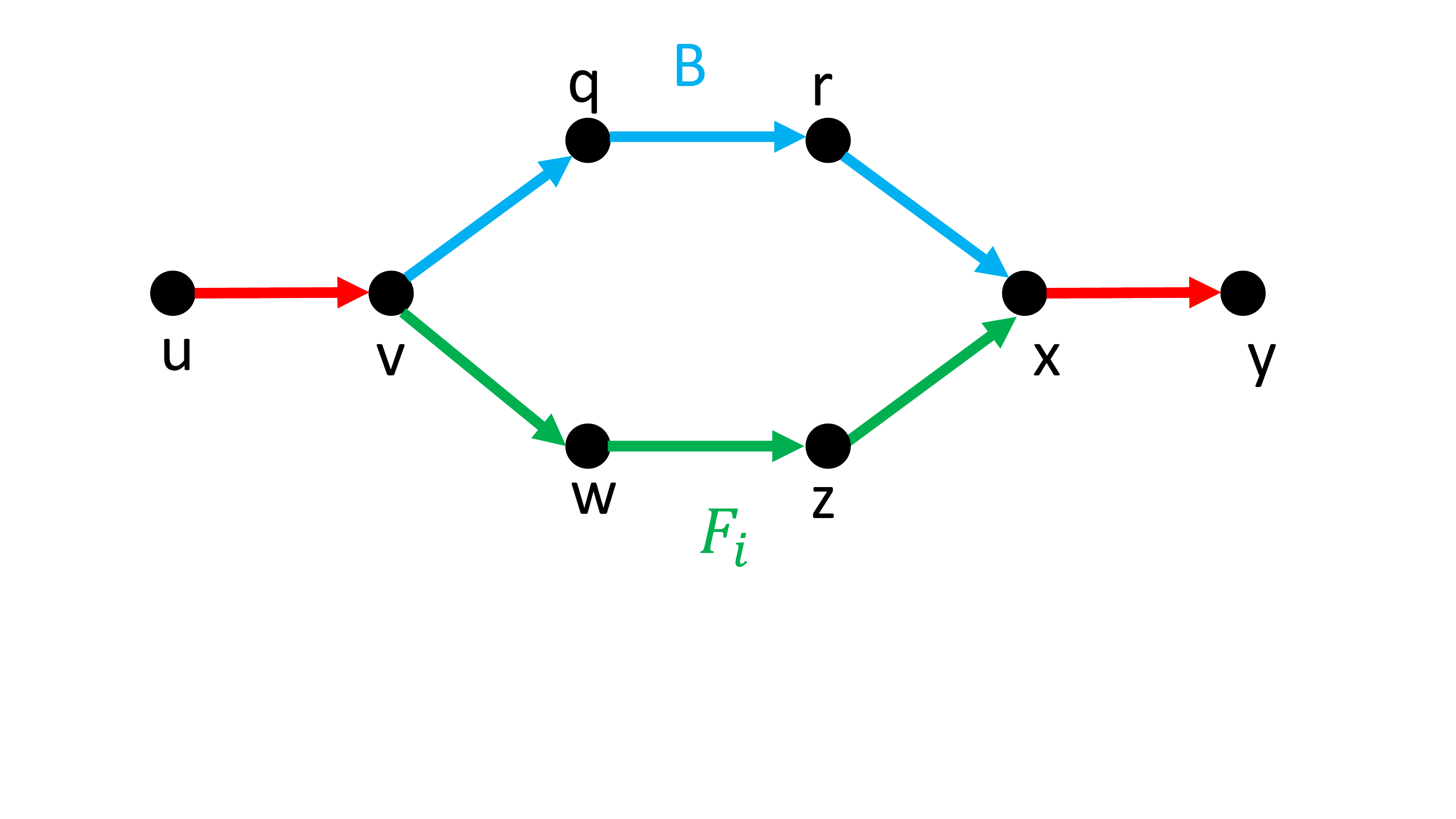}
\vspace{-15mm}
\caption{Let the $u\leadsto y$ sub-path of $F_i$ be a $u\leadsto v\leadsto w\leadsto z\leadsto x\leadsto y$ and the $u\leadsto y$ sub-path of $B$ be $u\leadsto v\leadsto q\leadsto r\leadsto x\leadsto y$. From Definition~\ref{defn:path-reverse-compatible}, it follows that $F_i$ and $B$ are not \emph{path-reverse-compatible} since they both first see $u\leadsto v$ and then see $x\leadsto y$. \label{fig:lemma}}
\end{figure}

\subsection{The Token Game}\label{tokengame}
A Token Game is given by $\langle H, h_1,h_2,\tokens, \moves, \hat{C} \rangle$ where
\begin{itemize}
\item $H=(V_H,E_H)$ is a graph
\item $h_1, h_2$ are two special vertices in $V_H$
\item $\tokens$ is a set of tokens
\item $\moves\subseteq \states\times \states$ is a set of moves
\item $\hat{C}:\moves\rightarrow \mathbb{R}_{\geq 0}$ is a cost function
\end{itemize}

We now define a \emph{state} of the Token Game:
\begin{definition}
A state of the Token Game is an element from the set $V_{H}^{|\tokens|}$, i.e., it is a vector of size $|\tokens|$ where each coordinate comes from $V_H$. It gives the location of each token from $\tokens$.
\end{definition}

A state of the Token Game gives the current location of each token (which vertex of $H$ it is currently on).

The start state is $h_{1}^{|\tokens|}$, i.e.,  when all the tokens are the vertex $h_1$. The end state is $h_{2}^{|\tokens|}$, i.e., when all the tokens are at the vertex $h_2$. The cost function $\hat{C}$ gives the cost of going from one state to another. The goal is to transport all tokens from the start state to the end state with minimum cost. Formally, we have the following problem:

\begin{center}
\noindent\framebox{\begin{minipage}{6.00in}
\textbf{\textsc{Solving-Token-Game}}\\
\emph{\underline{Input} }: A token game $\langle H, h_1, h_2,\tokens, \moves, \hat{C} \rangle$\\
\emph{\underline{Question}}: Find a set of moves of minimum cost to go from the start state $h_{1}^{|\tokens|}$ to the end state $h_{2}^{|\tokens|}$

\end{minipage}}
\end{center}

Now we present an algorithm to solve an instance $\langle H, h_1, h_2,\tokens,\moves,\costtoken \rangle$ of the Token game in time $O(|\moves|+ n^{|\tokens|}\log (n^{|\tokens|}))$
where $n$ is the number of the vertices of $H$.

\begin{lemma}
\textbf{\emph{(algorithm for Token Game)}}
\label{lem:alg-for-token-game}
There exists an algorithm for \textsc{Solving-Token-Game} which runs in time $O(|\moves|+ n^{|\tokens|}\log (n^{|\tokens|}))$.
\end{lemma}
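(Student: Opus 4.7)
The plan is to reduce \textsc{Solving-Token-Game} to a standard single-source shortest-path computation on an auxiliary weighted digraph. I would build a digraph $\mathcal{D}$ whose vertex set is exactly $\states$, i.e., one vertex per state of the token game, and whose arc set encodes the moves: for each $(\sigma, \sigma') \in \moves$ include an arc from $\sigma$ to $\sigma'$ of weight $\costtoken(\sigma, \sigma')$. A sequence of moves taking the tokens from the start state $h_{1}^{|\tokens|}$ to the end state $h_{2}^{|\tokens|}$ is, by definition, a directed walk in $\mathcal{D}$ from $h_{1}^{|\tokens|}$ to $h_{2}^{|\tokens|}$, and its total cost equals the sum of the arc weights along that walk. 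Since $\costtoken$ is non-negative, any minimum-cost walk may be chosen to be a simple path, so solving the token game is equivalent to finding a shortest $h_{1}^{|\tokens|} \leadsto h_{2}^{|\tokens|}$ path in $\mathcal{D}$.

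Having arrived at a shortest-path problem with non-negative weights, I would run Dijkstra's algorithm on $\mathcal{D}$ implemented with a Fibonacci heap. The digraph $\mathcal{D}$ has $|V_H|^{|\tokens|} = n^{|\tokens|}$ vertices and exactly $|\moves|$ arcs, so the standard analysis of Dijkstra gives running time $O(|E| + |V|\log |V|) = O(|\moves| + n^{|\tokens|} \log n^{|\tokens|})$, matching the claim. The $O(n^{|\tokens|})$ cost of initializing the priority queue is absorbed by the $n^{|\tokens|}\log n^{|\tokens|}$ term, and the actual sequence of moves realizing the minimum cost can be recovered in the same asymptotic time by maintaining the usual predecessor pointers.

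The main obstacle is essentially one of bookkeeping: each vertex of $\mathcal{D}$ is a tuple in $V_H^{|\tokens|}$, and during relaxation of an arc $(\sigma, \sigma')\in \moves$ I need to locate the priority-queue handle of $\sigma'$ in $O(1)$ amortized time. This is handled either by direct array indexing, treating a state as a base-$n$ integer with $|\tokens|$ digits, or by a hash map keyed on the tuple; neither affects the stated asymptotic bound. Since the sets $\moves$ and the cost function $\costtoken$ are part of the input, no additional combinatorial argument is required, and the lemma reduces to Dijkstra's complexity applied to the state graph $\mathcal{D}$.
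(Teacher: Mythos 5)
Your proposal is correct and follows essentially the same route as the paper: build the state graph on $\states$ with one weighted arc per move and run Dijkstra's algorithm. The paper's proof is terser but identical in substance; your added remarks about Fibonacci heaps and indexing states as base-$n$ integers are standard implementation details that the paper leaves implicit.
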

\begin{proof}
%
We build a \emph{game graph} $\hat{H}=(\hat{V}, \hat{E})$ from $H$ as follows: let $\hat{V} = \states$. Recall that $\moves\subseteq \states\times \states$. For each move $M=(\bar{x},\bar{y})\in \moves$ we add an edge $\bar{x}\rightarrow \bar{y}$ in $\hat{H}$ of cost equal to $\hat{C}(M)$.

Note that the starting state $h_{1}^{|\tokens|}$ and the end state $h_{2}^{|\tokens|}$ are both vertices in $\hat{H}$. By the choice of the edges in $\hat{H}$, it is easy to see that finding a shortest $h_{1}^{|\tokens|}\leadsto h_{2}^{|\tokens|}$ path in $\hat{H}$ gives a solution to \textsc{Solving-Token-Game}. We can do this by running Dijkstra's algorithm which takes $O(|E|+|V|\log |V|)$ time on a graph $G=(V,E)$.
In our case $|\hat{V}|= n^{|\tokens|}$ and $|\hat{E}|= |\moves|$. Therefore, we can solve \textsc{Solve-Token-Game} in $O(|\moves|+ n^{|\tokens|}\log (n^{|\tokens|}))$ time.

\end{proof}

\subsection{Reduction from \blah to \textsc{Solving-Token-Game}}\label{reduction}
Here, we provide a reduction from the \blah problem to \textsc{Solving-Token-Game}. As a consequence, we show that one can use the algorithm presented in Subsection~\ref{tokengame} to solve \blah in time $n^{O(k)}$.

Let $I = \langle G=(V,E),s,t, \omega \rangle$ be an instance  of \blah. We now build a Token Game given by $I'=\langle H, h_1, h_2,\tokens, \moves, \hat{C} \rangle$ where $H = G$, $h_1 = s$, $h_2 = t$ and $\tokens=\{\bt,\f_1,\f_2,\ldots,\f_k\}$. Note that $|\tokens|=k+1$. We now describe the set of moves $\moves$ and the associated cost function $\hat{C}$. Fix a state $\bar{v}\in V^{k+1}$, say $\bar{v}=(v_0,v_1, v_2, \ldots, v_{k})$.
\begin{enumerate}
\item \textbf{Backward Move}: For every edge $(w,v_0) \in E(G)$, we add a move $(\bar{v}, \bar{w})$ of cost $\omega(w,v_0)$ where
    \begin{itemize}
    \item $\bar{w}=(w_0, w_1, w_2, \ldots, w_{k})$
    \item $w_0=w$
    \item $w_i=v_i$ for each $i\in [k]$
    \end{itemize}

\item \textbf{Forward Moves}: For every $i\in [k]$ and every edge $(v_i,x) \in E(G)$, we add a move $(\bar{v}, \bar{x})$ of cost $\omega(v_i,x)$ where
    \begin{itemize}
    \item $\bar{x}=(x_0, x_1, x_2, \ldots, x_{k})$
    \item $x_i=x$
    \item $x_j=v_j$ for each $0\leq j\leq k, j\neq i$
    \end{itemize}

\item \textbf{Flip Move}: For each $i\in [k]$ we add a move $(\bar{v}, \bar{y})$ of cost equal to that of the shortest $v_i\leadsto v_0$ path in $G$ where
    \begin{itemize}
    \item $\bar{y}=(y_0, y_1, y_2, \ldots, y_{k})$
    \item $y_0=v_i$
    \item $y_i=v_0$
    \item $y_j=v_j$ for each $0\leq j\leq k, j\notin \{0,i\}$
    \end{itemize}

\end{enumerate}

As in Lemma~\ref{lem:alg-for-token-game}, we build a \emph{game graph} $\hat{G}=(\hat{V}, \hat{E})$ from $G=(V,E)$ as follows: let $\hat{G} = \state$. Recall that $\moves\subseteq \state\times \state$. For each move $M=(\bar{x},\bar{y})\in \moves$ we add an edge $\bar{x}\rightarrow \bar{y}$ in $\hat{H}$ of cost equal to $\hat{C}(M)$.

We now bound the number of moves in $\moves$, which is also equal to the number of edges in $\hat{G}$.

\begin{lemma}
\label{lem:upper-bound-number-of-moves}
The number of moves in $\moves$ (or equivalently the number of edges in $\hat{G}$) is $n^{O(k)}$,
\end{lemma}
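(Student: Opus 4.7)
The plan is to bound $|\moves|$ by enumerating, for each of the three move types, how many moves originate from a fixed state $\bar{v} = (v_0, v_1, \ldots, v_k) \in V^{k+1}$, and then multiplying by the number of possible source states. Since the state space has size $|V|^{k+1} = n^{k+1}$, the bound $n^{O(k)}$ will follow provided each type of move contributes at most a polynomial-in-$n$ factor per state.

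First I would handle backward moves. From a fixed state $\bar{v}$, a backward move is determined by a choice of in-neighbor $w$ of $v_0$ in $G$, so there are at most $d^{-}_{G}(v_0) \leq n$ such moves. Summing over all $n^{k+1}$ states gives at most $n^{k+2}$ backward moves. Next I would handle forward moves: from a fixed state $\bar{v}$, a forward move is determined by an index $i \in [k]$ together with an out-neighbor $x$ of $v_i$, giving at most $k \cdot n$ moves per state, and hence at most $k \cdot n^{k+2}$ forward moves in total. Finally, the flip moves contribute exactly $k$ moves per state (one per choice of $i \in [k]$), for a total of $k \cdot n^{k+1}$ flip moves.

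Adding the three contributions yields
\[
|\moves| \;\leq\; n^{k+2} \;+\; k \cdot n^{k+2} \;+\; k \cdot n^{k+1} \;=\; n^{O(k)},
\]
which matches the claim. Since $|\hat{E}|$ equals $|\moves|$ by construction of the game graph $\hat{G}$, the same bound applies to the number of edges of $\hat{G}$.

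The only subtlety worth flagging is the cost attached to a flip move, which is declared to be the length of a shortest $v_i \leadsto v_0$ path in $G$. This does not affect the counting (each pair $(\bar{v}, i)$ still contributes a single move), but it does require a one-time preprocessing step of all-pairs shortest paths in $G$; this is a polynomial overhead and does not inflate $|\moves|$. Hence there is no real obstacle in the counting itself—the proof is essentially a careful enumeration combined with the observation that $|V^{k+1}| = n^{k+1}$.
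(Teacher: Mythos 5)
Your proof is correct and follows essentially the same strategy as the paper: bound the number of moves originating from a fixed state (by type), then multiply by the number of states $n^{k+1}$. The only cosmetic difference is that you bound per-state degrees by $n$ and $kn$ directly, while the paper bounds them by $|E|$ and $2|E|+n$ and then uses $|E| = O(n^2)$; both give $n^{O(k)}$.
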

\begin{proof}
Fix a state $\bar{v}\in V^{k+1}$, say $\bar{v}=(v_0, v_1, v_2, \ldots, v_{k})$. The number of Backward moves from $\bar{v}$ is $d^{-}_{G}(v_0)$ since we add a backward move for each incoming edge into $v_0$. The number of Forward moves from $\bar{v}$ is $\sum_{i=1}^{k} d^{+}_{G}(v_i)$ since we add a forward move for each outgoing edge from $v_j$ where $j\in [k]$. The number of Flip moves is exactly $k$ since we add a flip move for each $i\in [k]$. Therefore, the degree of $\bar{v}$ in $\hat{G}$ is $d^{-}_{G}(v_0) + \Big(\sum_{i=1}^{k} d^{+}_{G}(v_i) \Big) + k \leq |E|+|E| + n$ since $k\leq n$ and $\sum_{v\in V} d^{+}_{G}(v) = |E|= \sum_{v\in V} d^{-}_{G}(v)$. Hence, the max degree of $\hat{G}$ is $2|E|+n$. Since $|\hat{V}|=n^{k+1}$ it follows that $|\moves|=|\hat{E}|\leq n^{k+1}\cdot (2|E|+n) = n^{O(k)}$ as $|E|= O(n^{2})$
\end{proof}

Next we show that $\opt(I) = \opt(I')$, where $\opt(I)$ and $\opt(I')$ denote for the optimum solutions of $I$ and $I'$ respectively. We do this by the following two lemmas:

\begin{lemma}\label{lm1}
$\opt(I) \leq \opt(I')$.
\end{lemma}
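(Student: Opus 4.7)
The plan is to convert an optimal Token Game solution $\mathcal{M}=(M_1,\dots,M_N)$ of cost $\costtoken(\mathcal{M}) = \opt(I')$ taking $s^{k+1}$ to $t^{k+1}$ into a feasible \blah solution of weight at most $\costtoken(\mathcal{M})$. I would first extract walks by tracking each token. For each $i\in[k]$, the walk $\fp_i$ is built by following $\f_i$: on a forward move using edge $(v_i,x)$ I append $(v_i,x)$, and on a flip move played from state $(v_0,v_1,\dots,v_k)$ I append a fixed shortest $v_i\leadsto v_0$ path in $G$ whose length is the flip's cost. Since $\f_i$ starts at $s$ and ends at $t$, each $\fp_i$ is an $s\leadsto t$ walk in $G$. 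The walk $\bp$ is obtained by tracing $\bt$'s trajectory and then reversing it: a backward move uses an edge $(w,v_0)\in E$, which becomes a genuine $G$-edge $w\to v_0$ once reversed, and each flip move is again replaced by the same shortest $v_i\leadsto v_0$ path; hence $\bp$ is a $t\leadsto s$ walk in $G$. A standard cycle-removal step then turns these walks into simple paths without increasing the \blah objective, because edge weights are non-negative and removing a cycle from a walk can only decrease the multiset of edges it contains.

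The second step is a per-edge cost comparison. For every $e\in E$ I would let $\alpha(e)$, $\beta_i(e)$, and $\gamma_i(e)$ denote, respectively, the number of backward moves using $e$, forward moves of token $i$ using $e$, and flip moves on token $i$ whose shortest $v_i\leadsto v_0$ path contains $e$. A direct tally gives $\costtoken(\mathcal{M}) = \sum_e \omega(e)\bigl(\alpha(e)+\sum_i \beta_i(e)+\sum_i \gamma_i(e)\bigr)$. From the construction, the multiplicity of $e$ in the walk $\fp_i$ equals $\beta_i(e)+\gamma_i(e)$ and the multiplicity in the walk $\bp$ equals $\alpha(e)+\sum_i \gamma_i(e)$, so after shortcutting $|\{i:e\in\fp_i\}|\leq \sum_i(\beta_i(e)+\gamma_i(e))$ and $[e\in\bp]\leq \alpha(e)+\sum_i \gamma_i(e)$. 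Plugging these into $\phi$ and using $\max(X,Y)\leq X+Y$ for non-negatives,
\[
\phi(e)\leq \max\!\Bigl(\sum_i(\beta_i(e)+\gamma_i(e)),\ \alpha(e)+\sum_i \gamma_i(e)\Bigr) = \sum_i \gamma_i(e) + \max\!\Bigl(\sum_i \beta_i(e),\ \alpha(e)\Bigr) \leq \alpha(e)+\sum_i \beta_i(e)+\sum_i \gamma_i(e),
\]
and summing $\omega(e)\phi(e)$ over $e$ recovers exactly $\costtoken(\mathcal{M})$, yielding $\opt(I)\leq \sum_e \omega(e)\phi(e)\leq \costtoken(\mathcal{M}) = \opt(I')$.

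The main subtlety is the accounting of flip moves: a single flip pays once for the shortest $v_i\leadsto v_0$ path, yet every edge of that path is used simultaneously by both extracted walks $\fp_i$ and $\bp$. This is exactly what is absorbed by $\phi$ being a maximum rather than a sum, which is why the $\gamma_i(e)$ terms appear only once (not doubled) in the final bound; if $\phi$ were a sum, the construction would not match the Token Game cost. The remaining technicalities, namely checking that reversing $\bt$'s trajectory indeed yields a directed walk in $G$ and that the cycle-shortcutting step is compatible with the multiset counters above, are immediate consequences of the definition of a backward move (which uses an in-edge of $\bt$'s current vertex) and of the flip cost (which is the length of a directed shortest path in $G$).
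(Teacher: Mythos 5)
Your proof follows the same strategy as the paper's: convert the optimal sequence of moves into forward walks (by tracing each $\f_i$) and a backward walk (by tracing $\bt$ and reversing), then compare costs per edge using the fact that $\phi$ is a maximum so the flip-move edges shared between $\fp_i$ and $\bp$ are paid only once. Your version is somewhat more careful than the paper's — you make the cycle-removal step and the $\alpha,\beta_i,\gamma_i$ bookkeeping explicit where the paper appeals to an informal per-edge comparison and a footnoted example — but the underlying argument is the same.
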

\begin{proof}
Let $\s'$ be a solution of $I'$ of cost $\opt(I')$. Then $\s'$ is a shortest $s^{k+1}\leadsto t^{k+1}$ path in $\hat{G}$. Each edge in $\hat{G}$ corresponds to a move from $\moves$. Let the moves corresponding to the path $\s'$ be $M_1, M_2, \ldots, M_r$.
%
%
%
Consider a move $M\in \{M_1. M_2. \ldots, M_r\}$ and say $M=(\bar{v}, \bar{w})$ where $\bar{v}=(v_0, v_1, v_{2}, \ldots, v_{k})$ and $\bar{w}=(w_0, w_1, w_2, \ldots, w_{k})$. We now build a solution $\s$ for \blah as follows: there are three cases to consider depending on the type of the move $M$
\begin{itemize}
\item \textbf{$M$ is a Backward Move}: Then we add the edge $(w_0, v_0)$ to $\s$. Note that $\hat{C}(M) = \omega(w_0, v_0)$.

\item \textbf{$M$ is a Forward Move, say for token $\f_i$}: Then we add the edge $(v_{i}, w_{i})$ to $\s$. Note that $\hat{C}(M)=\omega(v_{i}, w_{i})$.

\item \textbf{$M$ is a Flip Move, say between token $\f_i$ and $\bt$}: Let $P$ be the shortest $v_i\leadsto v_0$ path. Then we add the path $P$ to $\s$. Note that $\hat{C}(M) =\sum_{e\in P} \omega(e)$.
\end{itemize}

Since $\s'$ is a solution of $I'$ it follows that $\s$ is indeed a solution of $I$: the path traced by the moves of backward token $\bt$ gives an $t\leadsto s$ path and the paths traced by moves of the $k$ forward tokens $\f_1, \f_2, \ldots, f_k$ give the $k$ different $s\leadsto t$ paths . Also the cost of $\s'$ is equal to $\sum_{i=1}^{r} \hat{C}(M_i)$. As we have seen above, we were able to construct a solution $\s$ for $I$ from $\s'$. Note that for each edge $e$, its contribution to cost of $\s'$ is $\omega(e)$ which is greater than or equal to its contribution to cost of $\s$, since there might be some savings due to sharing of edges between the $t\leadsto s$ path and some $s\leadsto t$ path\footnote{Consider a path $x\rightarrow y\rightarrow z$. Let $e_1 = (x,y)$ and $e_2=(y,z)$. Suppose token $\f_1$ is at $x$ and token $\bt$ is at $z$ and they want to exchange positions. A flip move would result in cost equal to $\omega(e_1)+\omega(e_2)$. However, we can have a move sequence of higher cost which results in same final positions for $\f_1$ and $\bt$ as follows: first $\f_1$ makes a forward move and reaches $y$ with cost $\omega(e_1)$. Then there is a flip move of cost $\omega(e_2)$ which brings $\bt$ to $y$ and takes $\f_1$ to z. Finally $\bt$ makes a backward move of cost $\omega(e_1)$ to reach $x$. The total cost of this move sequence is $2\omega(e_1) +\omega(e_2)$, which is greater than the original cost of $\omega(e_1)+\omega(e_2)$}. Thus the cost of $\s$ is at most the cost of $\s'$. Given any solution $\s'$ for $I'$, we were able to construct a solution, say $\s$, for $I$ of cost less than or equal to that of $\s'$. Therefore, it follows that $\opt(I) \leq \opt(I')$.

\end{proof}

\begin{lemma}\label{lm2}
$\opt(I) \geq \opt(I')$ .
\end{lemma}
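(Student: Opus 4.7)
The plan is to start from an optimal \blah solution $(\FP,\bp)$ of $I$ that is reverse-compatible (Lemma~\ref{hosseinlemma}) and whose forward paths are edge-disjoint (Remark~\ref{remark:disjointness-of-forward-paths}), and to build an explicit sequence of legal moves in the Token Game $I'$ whose total cost is at most $\opt(I)$. For each $i\in[k]$, enumerate the maximal sub-paths shared by $\fp_i$ and $\bp$ as $P_{i,1},\ldots,P_{i,d_i}$ in the order they occur along $\fp_i$; by reverse-compatibility they occur along $\bp$ in the reverse order. Each $P_{i,j}$ has a well-defined start $u_{i,j}$ and end $v_{i,j}$, and both $\fp_i$ and $\bp$ traverse $P_{i,j}$ from $u_{i,j}$ to $v_{i,j}$ since its edges are directed.

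The backward token $\bt$ must travel from $s=h_1$ to $t=h_2$ using only Backward Moves, which amounts to tracing $\bp$ in reverse; this reverse traversal enters each $P_{i,j}$ at $v_{i,j}$ and exits at $u_{i,j}$, and by reverse-compatibility it meets $P_{i,1},\ldots,P_{i,d_i}$ in exactly that order, matching the order in $\fp_i$. Merge all shared sub-paths across $i$ into a single list $Q_1,\ldots,Q_D$ ordered by when $\bt$ meets them, and write $i(l),u_l,v_l$ for the associated index and endpoints of $Q_l$. The move sequence processes $Q_1,\ldots,Q_D$ in order: for each $l$,
\begin{enumerate}
\item advance $\bt$ from its current position (which is $s$ if $l=1$, else $u_{l-1}$) to $v_l$ along $\bp$ via a sequence of Backward Moves;
\item advance $\f_{i(l)}$ from its current position (which is $s$ if $Q_l$ is the first shared sub-path on $\fp_{i(l)}$, else the endpoint $v_{i(l),j-1}$ where $Q_l=P_{i(l),j}$) to $u_l$ along $\fp_{i(l)}$ via a sequence of Forward Moves;
\item apply a Flip Move between $\bt$ and $\f_{i(l)}$; this relocates $\bt$ to $u_l$ and $\f_{i(l)}$ to $v_l$ at cost equal to the shortest $u_l\leadsto v_l$ path in $G$, which is at most $\omega(Q_l)$.
\end{enumerate}
After all $Q_l$ are processed, drive $\bt$ along the remaining tail of $\bp$ to $t$ and each $\f_i$ along the remaining tail of $\fp_i$ to $t$ using Backward and Forward Moves respectively.

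Summing the contributions: Backward Moves by $\bt$ cover exactly the edges of $\bp$ outside every $Q_l$, contributing $\omega(\bp)-\sum_l\omega(Q_l)$; Forward Moves by the $\f_i$'s cover exactly the edges of each $\fp_i$ outside every $Q_l$ (using edge-disjointness of the $\fp_i$'s so that each $Q_l$ lies in a unique $\fp_{i(l)}$), contributing $\sum_i\omega(\fp_i)-\sum_l\omega(Q_l)$; and the $D$ Flip Moves contribute at most $\sum_l\omega(Q_l)$. The total is at most $\omega(\bp)+\sum_i\omega(\fp_i)-\sum_l\omega(Q_l)$, which by inclusion--exclusion equals the cost of $(\FP,\bp)$: since $\phi(e)=1$ exactly on $\bigcup_i\fp_i\cup\bp$ and the doubly counted edges are precisely those in $\bigcup_l Q_l$ (which are pairwise edge-disjoint), the cost of the \blah solution equals $\sum_i\omega(\fp_i)+\omega(\bp)-\sum_l\omega(Q_l)$. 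Hence $\opt(I')\leq\opt(I)$. The main obstacle is to confirm that this orchestration is feasible --- that at the moment of each Flip Move $\f_{i(l)}$ sits at $u_l$ and $\bt$ sits at $v_l$, with the intermediate Backward and Forward moves tracing valid sub-paths of $\bp$ and $\fp_{i(l)}$ --- and this is exactly what reverse-compatibility provides by aligning the orders in which $\fp_i$ and $\bt$'s trajectory encounter the shared sub-paths.
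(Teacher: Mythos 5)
Your proposal is correct and follows essentially the same approach as the paper's proof: start from a reverse-compatible optimal solution with edge-disjoint forward paths, let $\bt$ trace $\bp$ backwards via Backward Moves, let each $\f_i$ trace $\fp_i$ via Forward Moves, and replace each shared maximal sub-path with a single Flip Move, with reverse-compatibility guaranteeing that the order in which $\bt$ meets the shared sub-paths of $\fp_i$ agrees with the order $\f_i$ meets them. Your version is somewhat more explicit in the cost accounting (spelling out the inclusion--exclusion identity and noting the Flip cost is only an upper bound via the shortest path), but this is a presentational refinement rather than a different argument.
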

\begin{proof}
To prove this lemma, we use Lemma~\ref{hosseinlemma} which states there exists an optimal solution, say $\s$ , for $I$ which is reverse-compatible. We now build a solution $\s'$ for $I'$ which has the same cost as that of $\s$. This is sufficient to prove the lemma.

As observed in Remark~\ref{remark:disjointness-of-forward-paths}, we can assume that the $s\leadsto t$ paths are pairwise edge-disjoint. Let the $t\leadsto s$ path in $\s$ be $Q$ and the $k$ paths from $s\leadsto t$ be $P_1, P_2, \ldots, P_k$. For $i\in [k]$, let $P_{i,1}, P_{i,2}, \ldots, P_{i,r_i}$ be the maximal sub-paths shared between $P_i$ and $Q$ as seen in order from $P_i$. The reverse-compatibility of $\s$ implies that if we  traverse $Q$ backwards then we again see the paths in the same order, namely $P_{i,1}, P_{i,2} \ldots, P_{i,r_i}$. Let us call these paths which are shared between $Q$ and an $s\leadsto t$ path as \textbf{common-paths}.

We build $\s'$ by adding moves as follows: for each $i\in [k]$, add Forward moves (by selecting shortest paths) to transport each $\f_i$ from $s$ to the starting point of $P_{i,1}$. Follow $Q$ in backwards direction as it travels from $s$ to $t$. We move $\bt$ backwards along $Q$ until it reaches end-point $y$ of some common-path say $x\leadsto y$. Since we only require one $t\leadsto s$ path it follows that there exists unique $j\in [k]$ such that $P_{j,1}=x\leadsto y$. Then we add a Flip move between $\f_j$ (which is located at $x$) and $\bt$ (which is located at $y$). Continue to follow $Q$ backwards and move $\bt$ along it by Backward moves until either $\bt$ reaches $t$ or $\bt$ reaches end-point of another common path. If $\bt$ reaches end-point of another common path then we again do a Flip move as above. Otherwise if $\bt$ reaches $t$, then each forward token $\f_i$ is at the end-point of $P_{i,r_i}$. Add Forward moves (by selecting shortest paths) to transport the forward tokens from their current locations to $t$. Let the final set of moves be the solution $\s'$.

$\s$ is a valid solution for $I$ implies that we have $k$ paths from $s\leadsto t$ and one path from $t\leadsto s$. So, the moves we add indeed take all the $(k+1)$ tokens from the start state $s^{k+1}$ to the end state $t^{k+1}$, i.e., $\s'$ is a valid solution for $I'$.
We now show that the cost $\s'$ is equal to that of $\s$.
Let $e$ be any edge in $\s$. If $e$ is not part of a common-path, then in $\s'$ we only pay for it once in either a Backward move or a Forward move. On the other hand, if $e$ is part of a common path then in $\s'$ again we also pay for it only once in a Flip move. Therefore, cost of $\s'$ is equal to cost of $\s$ which implies that $\opt(I)\geq \opt(I')$.

\end{proof}

\begin{theorem}
\label{thm:blah-time}
There exists an algorithm that solves \blah in time $n^{O(k)}$, where $n$ is the number of vertices of the input graph.
\end{theorem}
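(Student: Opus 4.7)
The plan is to assemble the theorem by combining the three pieces already developed in Section~\ref{sec:algorithm}: the structural lemma (Lemma~\ref{hosseinlemma}), the reduction from \blah to \textsc{Solving-Token-Game} (Section~\ref{reduction}), and the Dijkstra-based algorithm for \textsc{Solving-Token-Game} (Lemma~\ref{lem:alg-for-token-game}). Given an input instance $I=\langle G,s,t,\omega\rangle$ of \blah, I would first build the Token Game instance $I'=\langle H,h_{1},h_{2},\tokens,\moves,\hat{C}\rangle$ exactly as specified in Section~\ref{reduction}, with $H=G$, $h_{1}=s$, $h_{2}=t$, and $\tokens=\{\bt,\f_{1},\ldots,\f_{k}\}$, so that $|\tokens|=k+1$. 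Since the cost of a Flip move is the shortest-path distance between two vertices of $G$, I would precompute all-pairs shortest paths once (e.g., via Floyd–Warshall in $O(n^{3})$ time), which is dominated by the subsequent bound.

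Next I would invoke Lemma~\ref{lem:alg-for-token-game} on $I'$ to obtain an optimal sequence of moves realizing $\opt(I')$. By Lemma~\ref{lem:upper-bound-number-of-moves} we have $|\moves|=n^{O(k)}$, and since $n^{|\tokens|}=n^{k+1}$, the running time of Lemma~\ref{lem:alg-for-token-game} is $O\!\left(n^{O(k)}+n^{k+1}\log n^{k+1}\right)=n^{O(k)}$. The solution for $I$ itself would then be recovered from the optimal move sequence using the explicit edge-assignment rule given in the proof of Lemma~\ref{lm1} (Backward and Forward moves contribute the underlying edge of $G$, while Flip moves contribute the corresponding shortest path in $G$).

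To conclude correctness, I would invoke $\opt(I)=\opt(I')$, which follows from Lemmas~\ref{lm1} and~\ref{lm2}: the inequality $\opt(I)\le\opt(I')$ is immediate from the conversion of moves into edges, and the reverse inequality $\opt(I)\ge\opt(I')$ requires using the reverse-compatible optimal solution guaranteed by Lemma~\ref{hosseinlemma}, so that the Token Game can synchronize the single backward token with each forward token via Flip moves on the common sub-paths without paying twice. Since the computed solution achieves cost $\opt(I')=\opt(I)$, the algorithm is correct and runs in $n^{O(k)}$ time.

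The bulk of the conceptual work has already been done in the earlier lemmas, so I do not expect a substantive obstacle here; the only thing to be careful about is matching the time bound, specifically verifying that (i) the precomputation of Flip-move costs does not dominate, and (ii) the $n^{k+1}\log n^{k+1}$ term from Dijkstra is absorbed into $n^{O(k)}$, both of which are straightforward.
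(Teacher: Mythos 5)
Your proposal matches the paper's proof essentially verbatim: build the Token Game instance $I'$ from $I$ as in Section~\ref{reduction}, invoke Lemmas~\ref{lm1} and~\ref{lm2} (which in turn rely on Lemma~\ref{hosseinlemma}) for correctness via $\opt(I)=\opt(I')$, and bound the running time using Lemma~\ref{lem:upper-bound-number-of-moves} together with Lemma~\ref{lem:alg-for-token-game}. The two small extras you add --- precomputing all-pairs shortest paths for Flip-move costs and explicitly recovering the solution from the optimal move sequence --- are implementation details the paper leaves implicit, and they do not change the argument.
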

\begin{proof}
Let $I=\langle G,s,t,\omega \rangle$ be an instance of \blah. As described in Section~\ref{reduction}, we build an instance $I'=\langle G, s, t,\tokens=(\bt, \f_1, \f_2, \ldots, \f_k), \moves, \hat{C} \rangle$ of \textsc{Solving-Token-Game}. Lemmas \ref{lm1} and \ref{lm2} imply that $I$ can be solved by instead solving the instance $I'$. By Lemma~\ref{lem:upper-bound-number-of-moves}, the number of moves in $I'$ is $|\moves|=n^{O(k)}$.
By Lemma \ref{lem:alg-for-token-game} we can solve $I'$ in time $O(|\moves|+ n^{|\tokens|}\log (n^{|\tokens|}) = O(n^{O(k)}+ n^{k+1}\log (n^{k+1})) = n^{O(k)}$
%
%
\end{proof}

\subsection{Structural Lemma fails for \pmb{\blahgeneral} when $\pmb{\min\{k_1, k_2\}\geq 2}$}
\label{app:no-structure}

Recall that in the \blahgeneral problem we want $k_1$ paths from $s\leadsto t$ and $k_2$ paths from $t\leadsto s$. So, we define a natural extension of Definition~\ref{defn:path-reverse-compatible} to reverse-compatibility of a set of forward paths and a set of backward paths as follows.
\begin{definition} \textbf{\emph{(general-reverse-compatible)}}
\label{defn:reverse-compatible-general}
Let $\FP=\{\fp_1,\fp_2,\dots,\fp_{k_1}\}$ be a set of $s\leadsto t$ paths and $\BP=\{\bp_1,\bp_2,\dots,\bp_{{k_2}}\}$ be a set of $t\leadsto s$ paths. We say $(\FP,\BP)$ is general-reverse-compatible, if for all $1\leq i \leq {k_2}$, $(\FP,\bp_i)$ is reverse-compatible.
\end{definition}

\begin{figure}[t]
\centering
\includegraphics[width=6in]{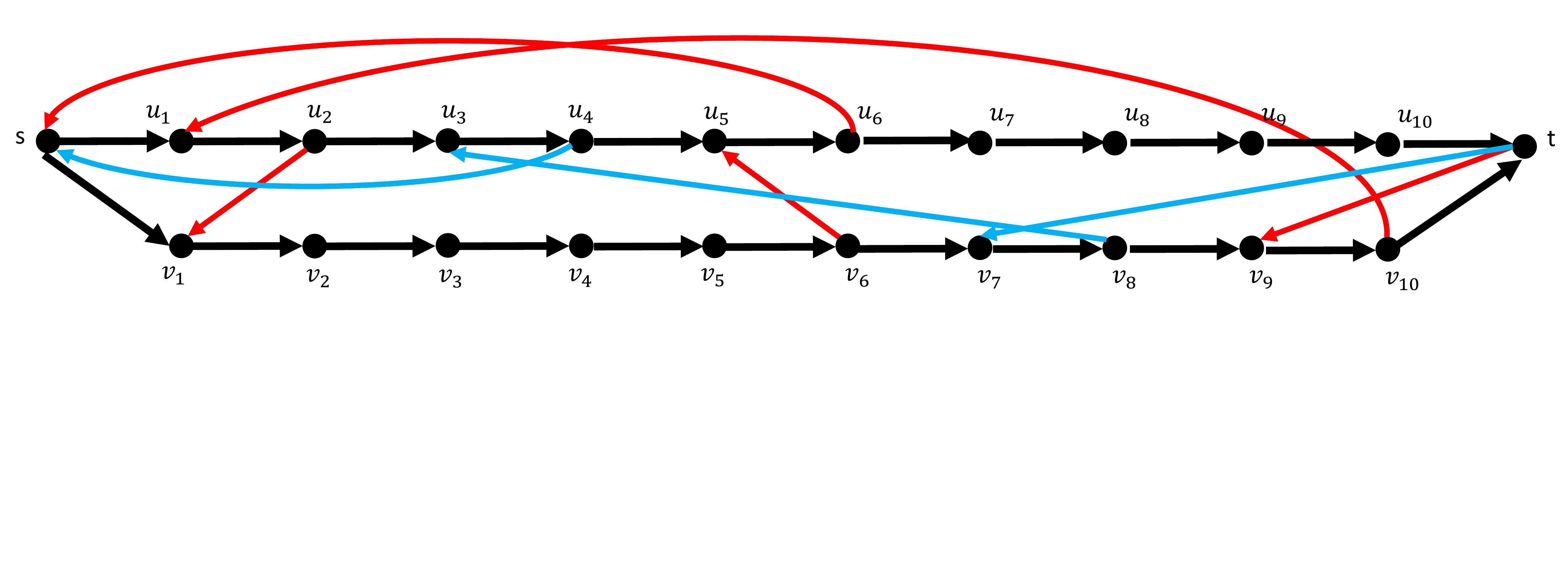}
\vspace{-30mm}
\caption{Each black edge has weight 1, each red edge and each blue edge has weight 0. \label{fig:counter}}
\end{figure}

The following theorem shows that Lemma~\ref{hosseinlemma} does not hold for the  \blahgeneral problem when $\min\{k_1, k_2\}\geq 2$, i.e.,~Lemma \ref{hosseinlemma} is in its most general form.
\begin{theorem}
There exists an instance of $2$-SCSS-$(2,2)$ in which no optimum solution is general-reverse-compatible.
\end{theorem}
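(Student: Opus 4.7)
The plan is to exhibit the explicit $2$-SCSS-$(2,2)$ instance depicted in Figure~\ref{fig:counter} and argue that every optimum solution for this instance fails general-reverse-compatibility. The instance consists of two weight-one ``black'' sub-paths $A$ and $B$ together with a collection of zero-cost ``red'' and ``blue'' auxiliary edges; the red edges form one routing system connecting $s$ and $t$ to the endpoints of $A$ and $B$, and the blue edges form a second, structurally different routing system. Throughout the argument, $A$ and $B$ are viewed as maximal black sub-paths, and the goal is to show that the two disjoint colored subnetworks force the two backward paths to traverse them in opposite orders.

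First I would establish the optimum cost. By inspection of the graph, every $s \leadsto t$ path uses both $A$ and $B$, and every $t \leadsto s$ path also uses both $A$ and $B$; since the colored edges contribute zero cost, any feasible solution pays at least $\omega(A) + \omega(B)$. This lower bound is attained exactly when $A$ and $B$ each appear in every $F_i$ and every $B_j$ (so the $\max$ in the objective equals one on each black edge and zero elsewhere). Hence every optimum solution uses each of $A$ and $B$ exactly once in both the forward direction (as sub-paths of every $F_i$) and the backward direction (as sub-paths of every $B_j$). The free-edge structure then forces both $F_1$ and $F_2$ to visit $A$ before $B$ from $s$ to $t$, while of the two paths $B_1, B_2$, one (routed through the red subnetwork) is forced to visit $A$ before $B$ when traversed from $t$ to $s$, and the other (routed through the blue subnetwork) is forced to visit $B$ before $A$.

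Finally I would combine these observations with Definition~\ref{defn:path-reverse-compatible}: the maximal common sub-paths of any $F_i$ with any backward path that uses both $A$ and $B$ contain $A$ and $B$, and $F_i$ sees them in the order $(A,B)$. Path-reverse-compatibility would require the backward path to see them in the order $(B,A)$ as it is traversed from $t$ to $s$, i.e., to visit $B$ before $A$. The backward path routed through the red subnetwork fails this test, so the pair $(F_i, B_{\text{red}})$ is not path-reverse-compatible for either $i \in \{1,2\}$, and thus $(\FP, \BP)$ is not general-reverse-compatible. The main delicacy is to verify that \emph{every} optimum solution exhibits this mismatch rather than merely the canonical one: this amounts to arguing that the tight bound $\omega(A)+\omega(B)$ precludes any optimum from duplicating an edge of $A$ or $B$, and that up to symmetry the only way to fit two edge-disjoint usable $t \leadsto s$ routes through the two colored subnetworks is the asymmetric one above. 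This case analysis, enabled by the simple colored structure of Figure~\ref{fig:counter}, is the heart of the proof.
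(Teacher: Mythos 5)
Your proposal does not match the structure of the paper's counterexample, and more importantly it contains an internal inconsistency in the cost accounting that makes the argument collapse. In the paper's Figure~\ref{fig:counter}, the black edges form two \emph{edge-disjoint} $s\leadsto t$ paths $P_1 = s\to u_1\to\cdots\to u_{10}\to t$ and $P_2 = s\to v_1\to\cdots\to v_{10}\to t$, each of weight $11$; there is no ``two weight-one black sub-paths $A$ and $B$ that every $s\leadsto t$ path must traverse'' structure. The optimum solution has cost $22$, takes $P_1$ and $P_2$ as the two forward paths, and constructs two backward paths $P_3, P_4$ that reuse a \emph{subset} of the black edges via the free colored edges, never causing any edge to appear in both forward paths or in both backward paths.

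The internal inconsistency is this: you claim the lower bound $\omega(A)+\omega(B)$ is ``attained exactly when $A$ and $B$ each appear in every $F_i$ and every $B_j$, so the $\max$ in the objective equals one.'' But if every edge $e$ of $A$ appears in both $F_1$ and $F_2$, then $|\{i\in[2]: e\in F_i\}| = 2$, hence $\phi(e)=\max(2,2)=2$, and the cost from $A$ alone is $2\omega(A)$. You appear to be treating the objective as the indicator $\sum_{e:\phi(e)>0}\omega(e)$, but the objective $\sum_e\omega(e)\cdot\phi(e)$ charges multiplicatively for repeated use within a direction. This is the whole point of the \blahgeneral cost model (``buying disjointness''), and the paper's example is built precisely so that the two forward paths must be the two distinct edge-disjoint shortest $s\leadsto t$ paths. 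Finally, the ``heart of the proof,'' namely that every optimum solution exhibits the order mismatch, is exactly the case analysis the paper carries out via its intermediate lemmas (any $t\leadsto s$ path must use a black edge from each of $P_1$ and $P_2$; the two $s\leadsto t$ paths must be $P_1$ and $P_2$; the two backward paths must start with the blue and red edges from $t$ respectively; and then a forced propagation pins the backward paths uniquely). Your sketch names this step but does not supply it, and since the structural picture you are reasoning from is wrong, the deferred case analysis could not be filled in as written.
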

\begin{proof}
Figure \ref{fig:counter} illustrates an example of the 2-SCSS(2,2) problem in which no optimal solution satisfies the reverse compatibility condition. Let the weight of the black edges be 1, and weight of all the other edges be 0. Since we have edges of weight 0, we will henceforth only consider the paths which do not have vertices repeating.

Let $P_1$ be the path $s\rightarrow u_1\rightarrow u_2\rightarrow \ldots \rightarrow u_9\rightarrow u_{10}\rightarrow t$ and $P_2$ be the path $s\rightarrow v_1\rightarrow v_2\rightarrow \ldots \rightarrow v_9\rightarrow v_{10}\rightarrow t$. Note that $P_1$ and $P_2$ are edge-disjoint and have weight 11 each. We now give a solution of total weight 22: take $P_1$ and $P_2$ as the two $s\leadsto t$ paths. For the two $t\leadsto s$ paths take $P_{3}:=t\rightarrow v_7\rightarrow v_8\rightarrow u_3\rightarrow u_4\rightarrow s$ and $P_{4}:= t\rightarrow v_9\rightarrow v_{10}\rightarrow u_1\rightarrow u_2\rightarrow v_1 \rightarrow v_2\rightarrow v_3\rightarrow v_4\rightarrow v_5\rightarrow v_6\rightarrow u_5\rightarrow u_6\rightarrow s$. Since every black edge is used exactly once in the outgoing paths and at most once in the incoming paths, it is easy to verify that the total weight of this solution is 22. Moreover, this solution is not general-reverse-compatible since the paths $P_1$  and $P_4$ do not satisfy the path-reverse-compatibility condition (recall Definition~\ref{defn:path-reverse-compatible}).

Therefore, to prove the theorem, it is now enough to show that all other solutions have a weight at least 23. A simple observation is that any solution has weight at least 22 since the shortest path from $s$ to $t$ has weight 11. Moreover, there are exactly two such $s\leadsto t$ paths of weight 11, viz. $P_1$ and $P_2$. Hence suppose to the contrary that there is a solution, say $\textbf{S}$, of weight exactly 22. We now show that $\textbf{S}$ must exactly be the solution described in above paragraph. We first show the following lemma:
\begin{lemma}
Any $t\leadsto s$ path uses at least one black edge from each of $P_1$ and $P_2$.
\end{lemma}
\begin{proof}
Note that there are only two edges outgoing from $t$: a blue edge and a red edge.  Suppose the first edge on $t\leadsto s$ path is the red edge $t\rightarrow v_9$. Then we must reach $v_{10}$ since the only outgoing edge from $v_9$ is $v_9\rightarrow v_{10}$. From $v_{10}$, we can either go back to $t$ (and start the argument again) or the other option is to go to $u_1$ which forces the use of edge $u_1\rightarrow u_2$. So we have used $v_9\rightarrow v_{10}$ from $P_2$ and $u_1\rightarrow u_2$ from $P_1$.

Suppose the first edge on $t\leadsto s$ path is the blue edge $t\rightarrow v_7$. This forces the use of the edge $v_7\rightarrow v_8$ from $P_2$ since  it is the only outgoing edge from $v_7$. From $v_8$, we can either reach $v_{9}$ (and the same argument applies as in previous case) or $u_3$. Reaching $u_3$ forces the use of the edge $u_3\rightarrow u_4$ from $P_1$ since it is the only outgoing edge from $u_3$.
\end{proof}
Hence, in order to obtain a solution of weight exactly 22 we cannot take either $P_1$ twice or $P_2$ twice for the choice of the two $s\leadsto t$ paths: since this itself gives a weight of 22, and the above claim implies a weight of at least 1 from the ``other" path. This shows the correctness of the following lemma:
\begin{lemma}
\label{lem:one-each}
The two $s\leadsto t$ paths in $\textbf{S}$ are exactly $P_1$ and $P_2$. Hence, to maintain a weight of exactly 22 it follows that we cannot use any black edge twice in the $t\leadsto s$ paths in $\textbf{S}$.
\end{lemma}

Observe that we still need to choose two $t\leadsto s$ paths, say $Q_1$ and $Q_2$, in $\textbf{S}$. The following lemma shows that $\textbf{S}$ needs to use both the red edge and blue edge outgoing from $t$:
\begin{lemma}
\label{lem:q1-q2}
Without loss of generality, the first edges of $Q_1$ and $Q_2$ are $t\rightarrow v_7$ and $t\rightarrow v_9$
\end{lemma}
\begin{proof}
Suppose not. Since the only two outgoing edges from $t$ are the blue edge $t\rightarrow v_7$ and the red edge $t\rightarrow v_9$, it follows that the first edge of both $Q_1$ and $Q_2$ is the same (and is either $t\rightarrow v_7$ or $t\rightarrow v_9$). Suppose the first edge of both $Q_1$ and $Q_2$ is $t\rightarrow v_7$ (the argument for the first edge being $t\rightarrow v_9$ is similar). Since $v_7\rightarrow v_8$ is the only outgoing edge from $v_7$, this implies that we must choose this edge in both $Q_1$ and $Q_2$. Since the two $s\leadsto t$ paths in $\textbf{S}$ are $P_1$ and $P_2$, this shows that the weight of $\textbf{S}$ is at least 23.
\end{proof}

Let us now consider the path $Q_1$: it starts with the edge $t\rightarrow v_7$. Since the only outgoing edges from $v_7, u_3$ are $v_7\rightarrow v_8, u_3\rightarrow u_4$ respectively it follows that $Q_1$ contains the sub-path $Q'_{1}:= t\rightarrow v_7\rightarrow v_8\rightarrow u_3\rightarrow u_4$. Similarly for $Q_2$, the first edge being $t\rightarrow v_9$ implies that it contains the sub-path $Q'_{2}:= t\rightarrow v_9\rightarrow v_{10}\rightarrow u_1\rightarrow u_2$. After this, $Q_2$ cannot contain the edge $u_2\rightarrow u_3$ (since this would force it to also use the edge $u_3\rightarrow u_4$, which was already used by $Q_1$). Hence after $Q'_{2}$, the path $Q_2$ must follow the sub-path $u_2\rightarrow v_1\rightarrow v_{2}\rightarrow v_{3}\rightarrow v_{4}\rightarrow v_{5}\rightarrow v_{6}$. After reaching $v_{6}$, the path $Q_2$ has two choices: either use the edge $v_6\rightarrow v_7$ or $v_6\rightarrow v_5$. But it cannot use the edge $v_{6}\rightarrow v_{7}$ since that would force it to use the edge $v_{7}\rightarrow v_{8}$, which was already used by $Q_1$. Therefore, from $v_6$ the path $Q_2$ reaches $u_5$ and is then forced to reach $u_6$. At this point $Q_2$ has two choices: either continue from $u_6$ to $t$ (in which case we again apply the whole argument starting from Lemma~\ref{lem:q1-q2}), or use the edge $u_6\rightarrow s$ of weight 0. Therefore we have that $Q_2$ is exactly the path $P_{4}:= t\rightarrow v_9\rightarrow v_{10}\rightarrow u_1\rightarrow u_2\rightarrow v_1 \rightarrow v_2\rightarrow v_3\rightarrow v_4\rightarrow v_5\rightarrow v_6\rightarrow u_5\rightarrow u_6\rightarrow s$. It remains to show that the path $Q_1$ is exactly $P_3$. We know that $Q_1$ contains the sub-path $Q'_{1}:= t\rightarrow v_7\rightarrow v_8\rightarrow u_3\rightarrow u_4$. From $u_4$, there are two choices: either use the edge $u_4\rightarrow s$ of weight 0, or use the edge $u_4\rightarrow u_5$. However, in the second choice, the next edge on $Q_2$ must be $u_5\rightarrow u_6$. But this edge was already used by $Q_2$ which contradicts Lemma~\ref{lem:q1-q2}. This shows that $Q_1$ is exactly the path $P_{3} =t\rightarrow v_7\rightarrow v_8\rightarrow u_3\rightarrow u_4\rightarrow s$, which completes the proof of the theorem.
\end{proof}

\section{$\pmb{f(k)\cdot n^{o(k)}}$ Hardness for \pmb{\blah}}
\label{sec:hardness}

In this section we prove Theorem~\ref{thm:main-hardness}. 
We reduce from the \textsc{Grid Tiling} problem (see Section~\ref{subsec:our-results} for the definition).
Chen et al.~\cite{chen-hardness} showed that for any computable function $f$, the existence of an $f(k)\cdot n^{o(k)}$ algorithm for $k$-\textsc{Clique} implies
ETH fails.
Marx~\cite{daniel-grid-tiling} gave the following reduction which transforms the problem of finding a $k$-\textsc{Clique} into an
instance of $k\times k$ \gt as follows: For a graph $G=(V, E)$ with $V=\{v_1, v_2, \ldots, v_n\}$ we build an instance $I_G$ of \gt
\begin{itemize}
\item For each $1\leq i\leq k$, we have $(j,\ell)\in S_{i,i}$ if and only if $j=\ell$.
\item For any $1\leq i\neq j\leq k$, we have $(\ell, r)\in S_{i,j}$ if and only if $\{v_{\ell}, v_r\}\in E$.
\end{itemize}
It is easy to show that $G$ has a clique of size $k$ if and only if the instance $I_G$ of \gt has a solution. Therefore, assuming ETH, even the following special case of $k\times
k$ \gt also cannot be solved in time $f(k)\cdot n^{o(k)}$ for any computable function $f$:

\begin{center}
\noindent\framebox{\begin{minipage}{4.75in}
\textbf{\textsc{$k\times k$ Grid Tiling}*}\\
\emph{Input }: Integers $k, n$, and $k^2$ non-empty sets $S_{i,j}\subseteq [n]\times [n]$ where $1\leq i, j\leq k$ such that for each $1\leq i\leq k$, we have $(j,\ell)\in S_{i,i}$ if and only if $j=\ell$\\
\emph{Question}: For each $1\leq i, j\leq k$ does there exist a value $\gamma_{i,j}\in S_{i,j}$ such that
\begin{itemize}
\item If $\gamma_{i,j}=(x,y)$ and $\gamma_{i,j+1}=(x',y')$ then $x=x'$.
\item If $\gamma_{i,j}=(x,y)$ and $\gamma_{i+1,j}=(x',y')$ then $y=y'$.
\end{itemize}
\end{minipage}}
\end{center}

We actually give a reduction from this problem to \blah. To the best of our knowledge, this is the first use of the special structure of \textsc{Grid Tiling}* in a W[1]-hardness proof.

 \begin{figure}[t]
\flushleft
 \includegraphics[width=6.7in]{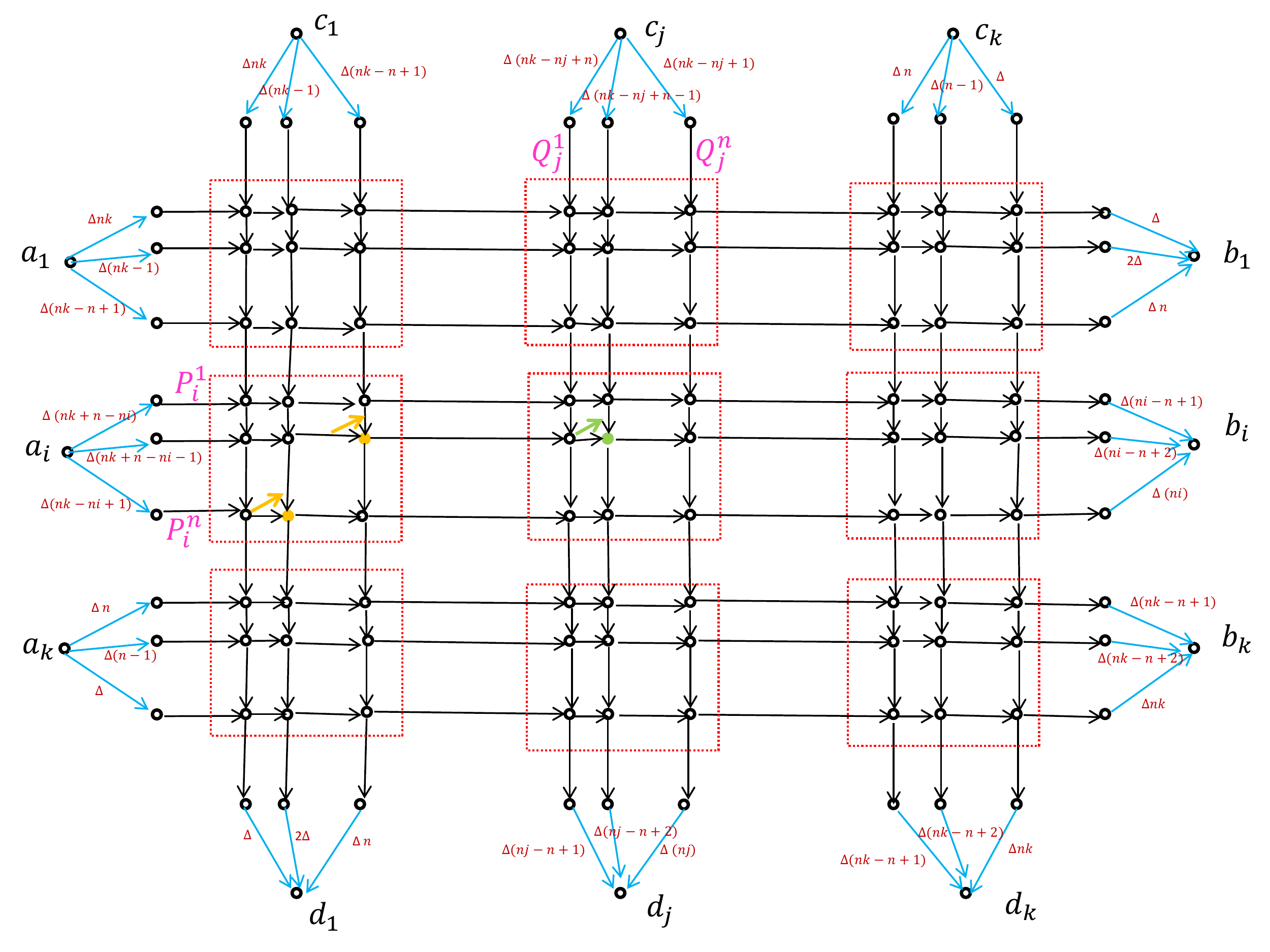}
 \caption{The instance of \blaha created from an instance of Grid Tiling*.
 \label{fig:dsf}}
 \end{figure}

Consider an instance of \textsc{Grid Tiling}*. We now build an instance of edge-weighted \blaha as shown in Figure~\ref{fig:dsf}.
We consider $4k$ special vertices: $(a_i, b_i, c_i, d_i)$ for each $i\in [k]$. We introduce $k^2$ red gadgets where
each gadget is an $n\times n$ grid. Let weight of each black edge be $4$.

\begin{definition}
For each $1\leq i\leq k$, an $a_i \leadsto b_i$ \emph{canonical} path is a path from $a_i$ to $b_i$ which starts with a blue edge coming out of $a_i$,
then follows a horizontal path of black edges and finally ends with a blue edge going into $b_i$. Similarly a $c_j\leadsto
d_j$ \emph{canonical} path is a path from $c_j$ to $d_j$ which starts with a blue edge coming out of $c_j$, then follows a
vertically downward path of black edges and finally ends with a blue edge going into $d_j$.
\end{definition}

For each $1\leq i\leq k$, there are $n$ edge-disjoint $a_i \leadsto b_i$ canonical paths: let us call them $P^{1}_{i}, P^{2}_{i}, \ldots, P^{n}_i$ as
viewed from top to bottom. They are named using magenta color in Figure~\ref{fig:dsf}. Similarly we call the canonical paths
from $c_j$ to $d_j$ as $Q^{1}_{j}, Q^{2}_{j}, \ldots, Q^{n}_j$ when viewed from left to right. For each $i\in [k]$ and
$\ell\in [n]$ we assign a weight of $\Delta(nk-ni+n+1-\ell), \Delta(ni-n+\ell)$ to the first, last edges of $P^{\ell}_{i}$ (which are colored blue)
respectively. Similarly for each $j\in [k]$ and $\ell\in [n]$ we assign a weight of $\Delta(nk-nj+n+1-\ell), \Delta(nj-n+\ell)$ to the
first, last edges of $Q^{\ell}_{j}$ (which are colored blue) respectively. Thus the total weight of first and last blue edges on any canonical path
is exactly $\Delta(nk+1)$. The idea is to choose $\Delta$ large enough such that in any optimum solution the paths between the
terminals will be exactly the canonical paths. We will see that $\Delta = 7n^{6}$ will suffice for our reduction. Any canonical path
uses two blue edges (which sum up to $\Delta(nk+1)$), $(k+1)$ black edges not inside the red gadgets and $(n-1)$ black edges inside
each gadget. Since the number of gadgets that each canonical path visits is $k$ and the weight of each black edge is 4, it follows that the
total weight of any canonical path is
\begin{equation}
\label{eqn:alpha}
\alpha = \Delta(nk+1)+4(k+1)+4k(n-1)
\end{equation}

\begin{figure}
 \centering
 \includegraphics[height=3in]{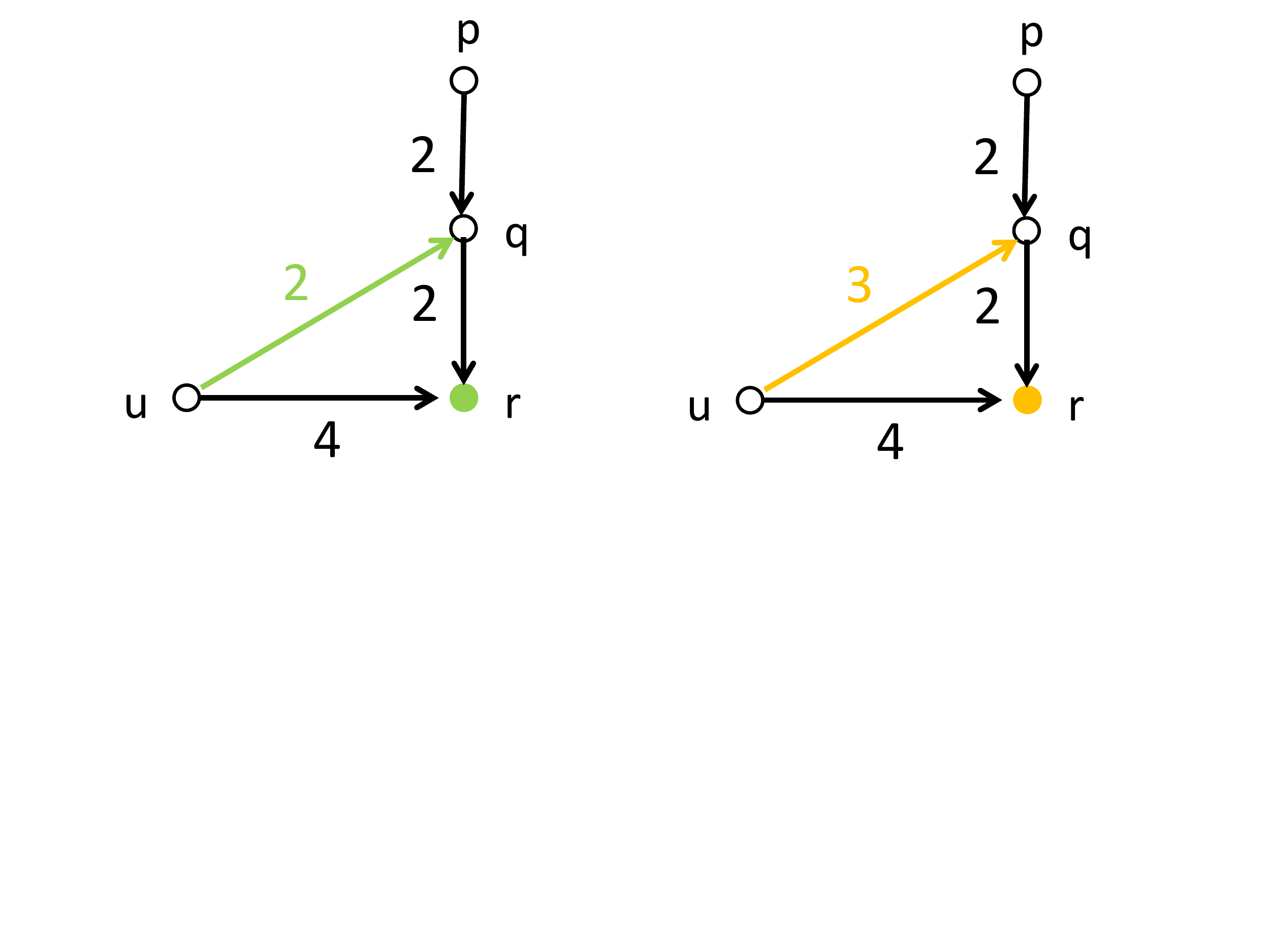}
 \vspace{-40mm}
 \caption{Let $u, r$ be two consecutive vertices on the canonical path say $P^{\ell}_{i}$. Let $r$ be on the canonical path
 $Q^{\ell'}_{j}$ and let $p$ be the vertex preceding it on this path. If $r$ is a green (respectively orange) vertex then we subdivide the edge $(p, r)$ by introducing a new vertex $q$
 and adding two edges $(p, q)$ and $(q, r)$ of weight 2. We also add an edge $(u, q)$ of weight 2 (respectively 3). The idea is if both the edges $(p, r)$ and $(u, r)$ were
 being used initially then now we can save a weight of 2 (respectively 1) by making the horizontal path choose $(u, q)$ and then we get $(q, r)$ for free, as it is already being used
 by the vertical canonical path.
 \label{fig:savings-new}}
 \end{figure}

Intuitively the $k^2$ red gadgets correspond to the $k^2$ sets in the \textsc{Grid Tiling}* instance. Let us denote the gadget
which is the intersection of the $a_i \leadsto b_i$ paths and $c_j \leadsto d_j$ paths by $G_{i,j}$. If $i=j$, then we call $G_{i,j}$ as a symmetric gadget; else we call it as an asymmetric gadget. We perform the following modifications on the edges inside the gadget: (see Figure~\ref{fig:dsf})
\begin{itemize}
\item \textbf{Symmetric Gadgets}: For each $i\in [k], x\in [n]$ we color green the vertex in the gadget $G_{i,i}$ which is the unique intersection of the canonical paths $P_{i}^{x}$ and $Q_{i}^{x}$. Then we add a shortcut as shown in Figure~\ref{fig:savings-new}. The idea is that we will enforce that the $a_i \leadsto b_i$ path is used as part of the $t\leadsto s$ path and the $c_i\leadsto d_i$ path is used as part of one of the $(2k-1)$ $s\leadsto t$ paths. Hence, if both the $a_i \leadsto b_i$ path and $c_i \leadsto d_i$ path pass through the green vertex then the $a_i \leadsto b_i$ path can save a weight of 2 by using the green edge (which has weight 2) and a vertical downward edge (which is free since already being used by $c_j \leadsto d_j$ canonical path) to reach the green vertex, instead of paying a weight of 4 to use the horizontal edge reaching the green vertex.

\item \textbf{Asymmetric Gadgets}: For each $i\neq j\in [k]$, if $(x,y)\in S_{i,j}$ then we color orange the vertex in the gadget $G_{i,j}$ which is the unique intersection of the canonical paths $P_{i}^{x}$ and $Q_{j}^{y}$. Then we add a shortcut as shown in Figure~\ref{fig:savings-new}. Similar to above, the idea is if both the $a_i \leadsto b_i$ path and $c_j \leadsto d_j$ path pass through the orange vertex then the $a_i \leadsto b_i$ path can save a weight of 1 by using the orange edge of weight 3 followed by a vertical downward edge (which is free since already being used by the $c_j \leadsto d_j$ canonical path) to reach the orange vertex, instead of paying a weight of 4 to use the horizontal edge reaching the orange vertex.
\end{itemize}

\subsection{Vertices and Edges not shown in Figure~\ref{fig:dsf}}
The following vertices and edges are not shown in Figure~\ref{fig:dsf} for sake of presentation:
\begin{itemize}
\item Add two vertices $s$ and $t$.
\item For each $1\leq i\leq k$, add an edge $(s, c_i)$ of weight $0$.
\item For each $1\leq i\leq k$, add an edge $(d_i, t)$ of weight $0$
\item Add edges $(t, a_k)$ and $(b_1, s)$ of weight 0.
\item For each $2\leq i\leq k$, introduce two new vertices $e_i$ and $f_i$.
\item For each $2\leq i\leq k$, add edges $s\rightarrow e_i$ and $f_i\rightarrow t$ of weight $0$.
\item For each $2\leq i\leq k$, add a path $b_{i}\rightarrow e_{i}\rightarrow f_{i}\rightarrow a_{i-1}$. Set the weights of $(b_i, e_i)$ and $(f_{i}, a_{i-1})$ to be $0$.
\item For each $2\leq i\leq k$, set the weight of the edge $(e_i, f_i)$ to be $W$. We call these edges as \textbf{connector} edges.  The idea is that we will choose $W$ large enough so that each connector edge is used exactly once in an optimum solution for \blaha. We will see later that $W=53n^{9}$ suffices for our reduction.
\end{itemize}

\begin{remark} \emph{We need a small technical modification: add one dummy row and column to the \textsc{Grid Tiling}* instance. Essentially, we
now have a dummy index $1$. So neither the first row nor the first column of any $S_{i,j}$ has any elements in the
\textsc{Grid Tiling}* instance. That is, no green vertex or orange vertex can be in the first row or first column of any gadget.}
\end{remark}

We now prove two theorems which together give a reduction from \textsc{Grid Tiling}* to \blaha. Let
\begin{equation}
\beta = 2k\cdot \alpha + W(k-1) - (k^2 + k)
\label{eqn:beta}
\end{equation}

\subsection{\textsc{Grid Tiling}* has a solution $\pmb{\Rightarrow}$ \pmb{\blaha} has a solution of weight $\pmb{\leq \beta}$}
\label{thm:easy-direction}

First we show the easy direction.

\begin{theorem}
\label{thm:dsf-redn-easy} If the \textsc{Grid Tiling}* instance has a solution then \blaha has a solution of weight at most $\beta$.
\end{theorem}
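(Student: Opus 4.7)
The plan is to exhibit an explicit feasible solution of cost at most $\beta$ built directly from the Grid Tiling* assignment. First I would observe that the Grid Tiling* constraints together with the symmetric-gadget condition ($(j,\ell)\in S_{i,i}\iff j=\ell$) force any valid assignment to have the form $\gamma_{i,j}=(\sigma_i,\sigma_j)$ for a single sequence $\sigma_1,\dots,\sigma_k\in [n]$: within row $i$ the first coordinate is constant, within column $j$ the second coordinate is constant, and the two agree on the diagonal. Consequently row $i$ singles out a unique horizontal canonical path $P^{\sigma_i}_i$ and column $j$ singles out a unique vertical canonical path $Q^{\sigma_j}_j$, and in gadget $G_{i,j}$ these two canonical paths meet at the unique lattice point at position $(\sigma_i,\sigma_j)=\gamma_{i,j}\in S_{i,j}$, which is therefore a colored vertex (green if $i=j$, orange otherwise).

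Next I would build the candidate solution. For the $2k-1$ forward paths I take: (i) for each $j\in [k]$, the path $s\to c_j\leadsto d_j\to t$ that uses $Q^{\sigma_j}_j$, contributing $k\alpha$ in total, and (ii) for each $i\in\{2,\dots,k\}$, the length-three forward path $s\to e_i\to f_i\to t$ using the connector edge, contributing $(k-1)W$. For the single backward path I chain the horizontal canonical paths together through the connector edges as $t\to a_k\leadsto b_k\to e_k\to f_k\to a_{k-1}\leadsto b_{k-1}\to\cdots\to a_1\leadsto b_1\to s$, except that inside each gadget $G_{i,j}$ I reroute the traversal of $P^{\sigma_i}_i$ near the colored vertex through the green or orange shortcut of Figure~\ref{fig:savings-new}, so that the subdivided vertical piece $q\to r$ of the shortcut is already paid for by $Q^{\sigma_j}_j$.

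Third, I would compute the cost using the $\phi(e)=\max(\cdot,\cdot)$ convention. The unshared baseline is $2k\alpha+2(k-1)W$, since the forward collection and the backward path each contribute $k\alpha+(k-1)W$ separately. Sharing the $k-1$ connector edges (used exactly once in each direction) collapses $\phi$ from $2$ to $1$ on those edges, recovering $(k-1)W$. In each of the $k$ symmetric gadgets the horizontal path saves $4-2=2$ per traversal of the green shortcut, and in each of the $k^{2}-k$ asymmetric gadgets it saves $4-3=1$ at the orange shortcut, giving total gadget savings of $2k+(k^{2}-k)=k^{2}+k$. The final cost is therefore
\[
2k\alpha+2(k-1)W-(k-1)W-(k^{2}+k)=2k\alpha+(k-1)W-(k^{2}+k)=\beta,
\]
as required by equation~(\ref{eqn:beta}).

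The main obstacle I anticipate is the $\phi$-accounting at the shortcuts: I need to verify that at every colored vertex the vertical canonical path $Q^{\sigma_j}_j$ really does traverse the freshly-inserted subdivided edge $q\to r$, so that the horizontal backward path pays only the $u\to q$ shortcut weight and the shared $q\to r$ costs only its own weight once (not twice). This global consistency across all $k^{2}$ gadgets is precisely what the Grid Tiling* row/column compatibility constraints guarantee, since they force $P^{\sigma_i}_i$ and $Q^{\sigma_j}_j$ to intersect exactly at the designated colored vertex of $G_{i,j}$ rather than any other lattice point. A small side check is that the two dummy row/column added to the instance prevent any colored vertex from sitting on the boundary, so each shortcut has well-defined $u$ and $p$ neighbors, and the construction never attempts to reroute through a non-existent edge.
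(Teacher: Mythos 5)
Your construction and cost accounting match the paper's proof: you build the same backward path chaining the horizontal canonical paths $P^{\sigma_i}_i$ through the connector edges, the same $k-1$ free forward paths through the connectors, and the same $k$ vertical canonical forward paths $Q^{\sigma_j}_j$, then harvest the shortcut savings of $2$ per symmetric gadget and $1$ per asymmetric gadget, arriving at $2k\alpha + (k-1)W - (k^2+k) = \beta$ exactly as in the paper. The only cosmetic difference is that the paper books the shortcut savings against the vertical forward paths while you book them against the horizontal backward path, but under the $\phi=\max$ convention these are the same quantity, so the argument is correct.
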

\begin{proof}
For each $1\leq i,j\leq k$ let $s_{i,j}\in S_{i,j}$ be the element in the solution of the \textsc{Grid Tiling}* instance.
Therefore, there exist $k$ numbers $\delta_1, \delta_2, \ldots, \delta_k$ such that $s_{i,j}=(\delta_i, \delta_j)$ for each $1\leq i,j\leq k$. We use the following path for the $t\leadsto s$ path in our solution:
\begin{itemize}
\item First use the edge $(t, a_k)$. This incurs weight 0.
\item For each $k\geq i\geq 2$, use the canonical $a_i \leadsto b_i$ path $P^{\delta_i}_{i}$ followed by the path $b_{i}\rightarrow e_i \rightarrow f_i\rightarrow a_{i-1}$. This way we reach $a_1$. Finally use the canonical path $P^{\delta_1}_{1}$ to reach $b_1$. The total weight of these edges is $\alpha\cdot k + W(k-1)$ since we have used $k$ canonical paths and $(k-1)$ connector edges.
\item Finally use the edge $(b_1, s)$ of weight 0.
\end{itemize}
Therefore, with a total weight of $\alpha\cdot k + W(k-1)$ we have obtained an $t\leadsto s$ path. Since we have used all the $k-1$ connector edges in the $t\leadsto s$ path, we can now use them for free in (different) $s\leadsto t$ paths. In particular, we get $(k-1)$ $s\leadsto t$ paths given by $s\rightarrow e_i \rightarrow f_i \rightarrow t$ for each $2\leq i\leq k$. Note that the total weight of these $(k-1)$ $s\leadsto t$ paths is 0, since for each $2\leq i\leq k$ the edge $(e_i, f_i)$ is obtained for free (as it was used in the $t\leadsto s$ path) and both the edges $(s, e_i)$ and $(f_i, t)$ have weight 0.

Note that we still need $k$ more $s\leadsto t$ paths. For each $j\in [k]$, we add the canonical $c_j \leadsto d_j$ path $Q^{\delta_j}_{j}$. For each $j\in [k]$, note that the edges $(s, c_j)$ and $(d_j, t)$ have weight 0. Hence, for each $j\in [k]$ we get a $s\leadsto t$ path whose weight is exactly equal to $\alpha$. However, now the canonical paths will encounter exactly one vertex in each gadget: either green or orange depending on whether the gadget is symmetric or asymmetric respectively. As shown in Figure~\ref{fig:savings-new}, we can save 2 in every symmetric gadget and 1 in every asymmetric gadget. Since the number of symmetric gadgets is $k$ and the number of asymmetric gadgets is $k(k-1)$, we save a total weight of $2k+k(k-1) = (k^{2}+k)$.

Hence, the total weight of the solution is equal to $\Big(\alpha \cdot k + W(k-1) \Big) + \Big(\alpha\cdot k - (k^{2}+k)\Big) = \beta$, from Equation~\ref{eqn:beta}.
\end{proof}

\subsection{\pmb{\blaha} has a solution of weight $\pmb{\leq \beta \Rightarrow}$ \textsc{Grid Tiling}* has a solution}
\label{subsec:hard-direction}

We now prove the other direction which is more involved. Fix a solution $\mathcal{X}$ of \blaha which has cost $\leq \beta$. First we show some preliminary lemmas:

\begin{definition}
\label{defn-level}
For each $i\in [k]$, let us call the set of gadgets $\{G_{i,1}, G_{i,2}, \ldots, G_{i,k}\}$ as the gadgets of level $i$.
\end{definition}

\begin{lemma}
\label{lem:t-s-path} The $t\leadsto s$ path in $\mathcal{X}$
\begin{itemize}
\item Must use all the $k-1$ connector edges
\item Contains an $a_i \leadsto b_i$ path (for each $i\in [k]$) which does not include any connector edge
\end{itemize}
\end{lemma}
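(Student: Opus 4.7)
The plan is a purely graph-theoretic analysis exploiting the small in- and out-degrees at the special vertices $s$, $t$, $a_i$, $b_i$, $c_j$, $d_j$, $e_i$, $f_i$, together with the fact that every internal arc in the gadget structure points rightward or downward in the combined $nk \times nk$ fine grid. The cost budget $\beta$ enters only to justify a standard simplification.

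First I would argue that we may take the $t \leadsto s$ path in $\mathcal{X}$ to be a simple directed path. Since $\phi(e)$ only depends on whether $e$ belongs to the single backward path, replacing a $t \leadsto s$ walk by a simple $t \leadsto s$ sub-path inside its edge set only removes edges from the backward path and hence never increases $\phi$ or the total cost, so the bound $\leq \beta$ is preserved. The unique out-arc of $t$ is $t \to a_k$, since the arcs $d_i \to t$ and $f_i \to t$ all enter $t$; dually the unique in-arc of $s$ is $b_1 \to s$, since $s \to c_i$ and $s \to e_i$ all leave $s$. So the simple path has the shape $t \to a_k \leadsto b_1 \to s$.

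The main step is a downward induction on $i$ from $k$ to $1$ that simultaneously establishes both bulleted conclusions. The inductive claim is that the path traces an $a_i \leadsto b_i$ sub-path lying entirely inside the row $i$ block, i.e.\ the subgraph consisting of the gadgets $G_{i,1}, \dots, G_{i,k}$ together with the blue edges at $a_i$, $b_i$ and the horizontal inter-gadget edges of row $i$, and then uses the connector triple $b_i \to e_i \to f_i \to a_{i-1}$ when $i \geq 2$. Two facts drive the step. First, inside the gadget structure every arc goes rightward or downward, so once the walk enters a row $i' > i$ it cannot climb back to row $i$. Second, the only exits from row $i$'s block are rightward to $b_i$, downward into row $i+1$'s block when $i < k$, and downward via blue edges into some $d_j$ when $i = k$. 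For $i = k$, a downward exit immediately forces $d_j \to t$, revisiting $t$ and violating simplicity, so the path leaves row $k$ at $b_k$. For $i < k$, any downward drop lands in the row $i+1$ block whose exit vertex $b_{i+1}$ was already consumed by the earlier part of the simple prefix; every continuation either revisits $b_{i+1}$ or cascades still further downward by the same argument, ultimately reaching row $k$ and forcing a forbidden revisit of $t$ through some $d_j$. So the path exits row $i$ at $b_i$. From $b_i$ the forced out-arc $b_i \to e_i$ followed by the unique $e_i \to f_i$ traverses the connector $(e_i, f_i)$, and at $f_i$ the two options $f_i \to t$ and $f_i \to a_{i-1}$ force $a_{i-1}$ by simplicity, closing the induction.

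The main obstacle I anticipate is the downward-drop dead-end argument for $i < k$: one must carefully track which exit vertices of the lower row blocks have already been used by the simple prefix and verify that no alternative unused exit remains, which requires the cascading inductive chase down to row $k$. Everything else is degree bookkeeping at the special vertices combined with monotonicity of arcs in the fine grid; the two conclusions then drop out, since the connectors $(e_i, f_i)$ for $i \in \{2, \dots, k\}$ all appear and each $a_i \leadsto b_i$ sub-path lies inside its row block, which contains no connector edge.
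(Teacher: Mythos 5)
Your proof is correct, but it takes a genuinely different route from the paper's. You first reduce to the case that the $t\leadsto s$ path is a \emph{simple} directed path (a valid normalization, since shortening a walk never increases the objective), and then run a downward induction on the rows $k, k-1, \ldots, 1$, with the key step being a cascading dead-end argument: any downward drop from row $i$ eventually forces a revisit of an already-consumed exit vertex $b_{i'}$ ($i' > i$) or a revisit of $t$ through some $d_j$, contradicting simplicity. The paper does \emph{not} assume simplicity and instead works directly with the fixed $\mathcal{X}$: it observes that the walk must go from $a_k$ to $b_1$, that crossing from level $i$ to level $i-1$ requires the connector through $b_i$, and then, for the second bullet, handles possible re-entry by taking the sub-path starting at the \emph{last} occurrence of $a_i$ on the walk — that sub-path cannot climb back to a level above $i$ before reaching $b_i$, so it uses no connector. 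So the paper's argument is about the given $\mathcal{X}$ as-is and extracts the right witness sub-path; your argument replaces $\mathcal{X}$ by a normalized solution and then gets the clean nested structure $a_k \leadsto b_k \to e_k \to f_k \to a_{k-1} \leadsto \cdots \leadsto b_1$ outright. Your version is more explicit and self-contained (it spells out the cascading dead-end, which the paper compresses into ``the only way to reach a gadget of level $i-1$ from level $i$ is through $b_i$''); the trade-off is that the lemma as stated refers to a fixed $\mathcal{X}$ of cost $\leq \beta$, so after your WLOG you are really proving the lemma for a possibly different feasible solution of no greater cost. In the surrounding argument (Theorem~\ref{thm:dsf-redn-hard}) this is harmless, since only the existence of \emph{some} cheap solution with the claimed structure is ever used, but it is worth flagging that your statement is formally about the normalized solution rather than the literal $\mathcal{X}$. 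Both approaches are sound.
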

\begin{proof}
The only outgoing edge from $t$ is $(t, a_k)$ and the only incoming edge into $s$ is $(b_1, s)$. Hence, the $t\leadsto s$ is essentially a path from $a_k \leadsto b_1$. Since the edges in the gadgets are oriented downwards and rightwards, the only way to reach a gadget of level $i-1$ from a gadget of level $i$ is to go to the vertex $b_i$ and then use the path $b_{i}\rightarrow e_i \rightarrow f_i \rightarrow a_{i-1}$. That is, we must use all the $(k-1)$ connector edges which are given by $(e_i, f_i)$ for each $2\leq i\leq k$.

Now we show the second part of the lemma. First observe that the above argument also implies that $\mathcal{X}$ contains an $a_{i}\leadsto b_i$ path for each $2\leq i\leq k$. Since the only incoming edge into $s$ is $(b_1, s)$, we must also have an $a_{1}\leadsto b_1$ path in $\mathcal{X}$. Therefore, the $t\leadsto s$ path contains an $a_i \leadsto b_i$ path for each $i\in [k]$. Clearly, the $a_k \leadsto b_k$ path cannot use any connector edge due to orientation of the edges. For $1\leq i\leq k-1$ consider a $a_i \leadsto b_i$ path $P$ in $\mathcal{X}$. If it uses any connector edge, say $(e_j, f_j)$, then it follows from the orientation of the edges that $j>i$. Hence this path $P$ reaches the vertex $b_j$ which is at level $j$. Recall that the only way to climb a level above in the graph (that is, one with a smaller index) is through connector edges. Therefore, the next time that the path $P$ reaches level $i$ (which it has to in order to reach vertex $b_i$) it must do so at vertex $a_i$. Hence, the $a_i\leadsto b_i$ sub-path of $P$ which starts at the last occurrence of $a_i$ on $P$ satisfies the condition of not using any connector edge.

\end{proof}

\begin{lemma}
\label{lem:no-s-t-share-blue-with-a-b} For each $i\in [k]$, the sum of weights of blue edges incident on $a_i$ and $b_i$ on the $a_i \leadsto b_i$ path in $\mathcal{X}$ is at least $\Delta(nk+1)$.
\end{lemma}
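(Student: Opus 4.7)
The plan is to apply Lemma~\ref{lem:t-s-path} to extract an $a_i\leadsto b_i$ sub-path $P$ of the $t\leadsto s$ path in $\mathcal{X}$ that uses no connector edge, and (by shortcutting any cycles) treat $P$ as simple. Since the only edges leaving $a_i$ are the $n$ blue edges into the first column of the leftmost level-$i$ gadget, and the only edges entering $b_i$ are the $n$ blue edges from the last column of the rightmost level-$i$ gadget, simplicity of $P$ forces it to use exactly one blue edge at $a_i$ and exactly one at $b_i$. Say these are the first edge of the canonical path $P^{\ell_1}_{i}$ and the last edge of the canonical path $P^{\ell_2}_{i}$; their weights sum to
$$\Delta(nk-ni+n+1-\ell_1)+\Delta(ni-n+\ell_2) \;=\; \Delta(nk+1+\ell_2-\ell_1).$$
Therefore the lemma reduces to proving the row-monotonicity statement $\ell_2\geq\ell_1$, i.e., the row index of the level-$i$ grid cannot strictly decrease as one traverses $P$ from $a_i$ to $b_i$.

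To establish this monotonicity I would classify every internal edge of $P$ by its effect on the row index. The horizontal black edges inside a gadget and between consecutive gadgets of level $i$ preserve the row, and the vertical black edges inside a gadget strictly increase it (they are directed downward by construction). The only remaining case is the three shortcut edges $(p,q)$, $(q,r)$, $(u,q)$ introduced in Figure~\ref{fig:savings-new}: although $q$ sits strictly above the row of $r=u$, its unique outgoing edge is $(q,r)$, so any visit to $q$ commits $P$ to exiting immediately to that same row. Hence every two consecutive edges of $P$ result in a net non-decrease of the row index, and the monotonicity $\ell_2\geq\ell_1$ follows, giving the desired bound $\Delta(nk+1)$.

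The main subtlety is the careful row-accounting for the subdivision vertices $q$ of Figure~\ref{fig:savings-new}: na\"ively the edge $(u,q)$ looks like an upward move, which would threaten the monotonicity argument. The correct viewpoint, which the proof must make explicit, is that $q$ has in-degree two (from $p$ and $u$) but out-degree one (to $r$), so any simple sub-path through $q$ contributes a net horizontal motion landing in the row of $r$. Once this observation is in place, the rest of the proof is routine arithmetic using the formulas assigned in the construction.
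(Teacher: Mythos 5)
Your proposal is correct and follows essentially the same approach as the paper's proof: extract the connector-free $a_i\leadsto b_i$ sub-path via Lemma~\ref{lem:t-s-path}, identify the blue edges at $a_i$ and $b_i$ as first/last edges of canonical paths $P^{\ell_1}_i$ and $P^{\ell_2}_i$, compute their weight sum as $\Delta(nk+1)+\Delta(\ell_2-\ell_1)$, and invoke the downward/rightward orientation of gadget edges to conclude $\ell_2\ge\ell_1$. The one place you add value is the explicit row-accounting for the subdivision vertices $q$ introduced in Figure~\ref{fig:savings-new}: the paper dismisses the monotonicity claim with a single clause (``edges in gadgets are oriented downwards and rightwards''), whereas you notice that the edge $(u,q)$ superficially points to a vertex lying between two rows and carefully argue that $q$'s out-degree being one (to $r$, in the same row as $u$) forces the two-edge detour $u\to q\to r$ to be net row-preserving. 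This is not a different route, but it does make explicit a subtlety the paper leaves implicit, and it is exactly the right observation to dispel the apparent threat to monotonicity.
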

\begin{proof}
From Lemma~\ref{lem:t-s-path}, for each $i\in [k]$ we know that $\mathcal{X}$ contains an $a_i \leadsto b_i$ path which does not include any connector edge, i.e., the edges of this $a_i \leadsto b_i$ path are contained among the gadgets of level $i$. We must use at least one blue edge incident on $a_i$ and one blue edge incident on $b_i$. Let the blue edges incident on $a_i, b_i$ be from the canonical paths $P^{\ell}_{i}, P^{\ell'}_{i}$. Since the edges in gadgets are oriented downwards and rightwards, it follows that $\ell' \geq \ell$. Hence the sum of weights of the blue edges is given by $\Delta(nk-ni+n+1-\ell) + \Delta(ni-n+\ell') = \Delta(nk+1)+ (\ell'-\ell) \geq \Delta(nk+1)$.
\end{proof}

\begin{lemma}
\label{lem:connector-edges}
At least $k$ of the $s\leadsto t$ paths in $\mathcal{X}$ cannot use any connector edge.
\end{lemma}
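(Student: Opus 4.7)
I argue by contradiction. Suppose fewer than $k$ of the $2k-1$ forward $s\leadsto t$ paths in $\mathcal{X}$ avoid all connector edges, so at least $k$ of them each use some connector edge $(e_i,f_i)$.

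First, I establish a per-path weight lower bound. If a forward path $F$ uses any connector, then $\omega(F) \geq W$ since the connector has weight $W$. If $F$ avoids all connectors, I observe that each $e_i$ has only one outgoing edge (the connector itself), so $F$ cannot visit any $e_i$ or $f_i$, and must therefore begin $s\to c_j$ and end $d_{j'}\to t$. By an argument analogous to Lemma~\ref{lem:no-s-t-share-blue-with-a-b}, the two blue edges of $F$ incident to $c_j$ and $d_{j'}$ sum to at least $\Delta(nk+1)$; and since the interior gadget edges are oriented down/right with weight $4$ (any shortcut merely subdivides a vertical weight-$4$ edge into two weight-$2$ pieces), a monotone path from the top of $G_{1,j}$ to the bottom of $G_{k,j'}$ uses black edges of total weight at least $4(k+1)+4k(n-1)$. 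Hence $\omega(F)\geq\alpha$ (cf.~Equation~\ref{eqn:alpha}).

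Next, I invoke the key inequality $\phi(e) = \max(c_F(e),c_B(e)) \geq c_F(e)$, where $c_F(e)=|\{i:e\in F_i\}|$. This yields
\[
\mathrm{cost}(\mathcal{X}) \;=\; \sum_e \omega(e)\,\phi(e) \;\geq\; \sum_e \omega(e)\,c_F(e) \;=\; \sum_{i=1}^{2k-1}\omega(F_i) \;\geq\; kW+(k-1)\alpha,
\]
where the last inequality combines the per-path bounds with the assumption that at least $k$ forward paths use connectors. Combining with $\mathrm{cost}(\mathcal{X})\leq\beta = 2k\alpha + W(k-1) - (k^2+k)$ and rearranging gives $W \leq (k+1)\alpha - (k^2+k)$. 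But the parameter choices $\Delta=7n^6$ and $W=53n^9$ yield $(k+1)\alpha = O(n^7k^2)$, which is strictly smaller than $W$ once $n$ is sufficiently large relative to $k$, producing the desired contradiction.

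The main obstacle is verifying the per-path lower bound $\omega(F)\geq\alpha$ for every non-connector-using forward path. One must check that no monotone $s\leadsto t$ path in the connector-free subgraph can beat $\alpha$: any horizontal zigzag only adds weight-$4$ black edges, and traversing a shortcut vertex $q$ along its vertical subdivision $(p,q),(q,r)$ costs the same $2+2=4$ as the original vertical edge it replaces, so the canonical paths are indeed weight-minimal. Once this per-path bound is in place, the rest is the arithmetic comparison above.
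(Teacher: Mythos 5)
Your proof is correct, but it takes a genuinely different route from the paper's. The paper argues purely via the connector edges: since at least $k$ forward paths each use one of the only $k-1$ connector edges, by pigeonhole some connector $(e_i,f_i)$ is used by two forward paths, hence $\phi(e_i,f_i)\geq 2$; combined with the backward path using all $k-1$ connectors (Lemma~\ref{lem:t-s-path}), the weight of $\mathcal{X}$ is at least $W(k-1)+W=Wk$, and a direct arithmetic check (using $k\leq n$, $\Delta=7n^6$, $W=53n^9$) shows $Wk>\beta$. You instead observe that $\phi(e)\geq c_F(e)$, so $\mathrm{cost}(\mathcal{X})\geq\sum_i\omega(F_i)$, and then charge $W$ to each connector-using forward path and $\alpha$ to each connector-avoiding one, obtaining $kW+(k-1)\alpha>\beta$ (which reduces to $W>(k+1)\alpha-(k^2+k)$, holding by the same parameter choices). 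Your approach sidesteps pigeonhole and the backward path entirely, which is elegant, but it pays for this by needing the per-path lower bound $\omega(F)\geq\alpha$ for connector-avoiding forward paths — an extra claim the paper does not need here (though similar reasoning appears later in Lemmas~\ref{lem:a-i-b-i} and \ref{lem:black}). Your sketch of that bound is sound (connector-avoidance forces the $s\to c_j$ and $d_{j'}\to t$ endpoints; the blue edges contribute $\geq\Delta(nk+1)$; the shortcut subdivision keeps the vertical black-edge cost unchanged at $4$ per unit, so the black edges contribute $\geq 4(k+1)+4k(n-1)$), but it should be stated as a lemma and proved carefully in a full write-up. One small imprecision: you say the final arithmetic works ``once $n$ is sufficiently large relative to $k$,'' but in fact $(k+1)\alpha\leq 52n^9<53n^9=W$ holds unconditionally for all $k\leq n$, which is what the reduction requires. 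The paper's pigeonhole argument is the more economical of the two; your version illustrates the reusable principle that $\mathrm{cost}\geq\sum_i\omega(F_i)$, which is a nice observation but is heavier machinery than needed for this particular lemma.
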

\begin{proof}
If less than $k$ of the $s\leadsto t$ paths in $\mathcal{X}$ do not use connector edges, then this implies that at least $k$ of the $s\leadsto t$ paths in $\mathcal{X}$ use a connector edge, since we require $(2k-1)$ paths from $s\leadsto t$. %
Since there are exactly $(k-1)$ connector edges, some connector edge is used by two different $s\leadsto t$ paths. As we have seen in Lemma~\ref{lem:t-s-path}, the $t\leadsto s$ path in $\mathcal{X}$ must use all the $(k-1)$ connector edges. Hence, the weight of $\mathcal{X}$ is $\geq W(k-1)+ W = Wk$. We show below that this is greater than $\beta$, which gives a contradiction.

\begin{align*}
\beta &= W(k-1) + 2k\Big(\Delta(nk+1) + 4(k+1)+4k(n-1)\Big) - (k^{2}+k)\\
&\leq W(k-1) + 2k\Big(\Delta(nk+1) + 4(k+1)+4k(n-1)\Big) \\
&= W(k-1) + 2k\Big(7n^{6}(nk+1) + 4(k+1)+4k(n-1)\Big) \quad \text{[Since $\Delta=7n^{6}$]}\\
&\leq W(k-1)+ 2n\Big(7n^{6}(2n^{2}) + 4(2n)+4n^{2}\Big) \quad \text{[Since $k\leq n$]}\\
&\leq W(k-1)+ 2n\Big(14n^{8} + 8n^{8}+4n^{8}\Big) \\
&= W(k-1)+ 52n^{9} \\
&< W(k-1) + 53n^{9} \\
&= W(k-1) + W \quad \text{[Since $W=53n^{9}$]}
\end{align*}
\end{proof}

We call the $s\leadsto t$ paths described in Lemma~\ref{lem:connector-edges} as \textbf{expensive} paths. Since these paths do not use a connector edge, it follows that the only outgoing edges from $s$ to be considered are to $\{c_1, c_2, \ldots, c_k\}$ and the only incoming edges into $t$ to be considered are from $\{d_1, d_2, \ldots, d_k\}$. So, we can think of the expensive paths as actually $k$ paths from $\{c_1, c_2, \ldots, c_k\}$ to $\{d_1, d_2, \ldots, d_k\}$. Since expensive paths do not use any connector edge, the existence of a $c_j\leadsto d_{\ell}$ path implies $\ell \geq j$.

\begin{definition}
\label{defn-lambda-mu}
For each $i\in [k]$, let $\lambda_i$ denote the number of $c_{i}\leadsto d_{i}$ expensive paths and $\mu_i$ denote the number of $c_{i}\leadsto \{d_{i+1}, d_{i+2}, \ldots, d_{k}\}$ expensive paths respectively in $\mathcal{X}$.
\end{definition}

From Lemma~\ref{lem:connector-edges}, it follows that
\begin{equation}
\label{eqn:sum-lambda-mu}
\sum_{i=1}^{k} (\lambda_i +\mu_i) \geq k
\end{equation}

\begin{lemma}
\label{lem:a-i-b-i} Let $c_j\leadsto d_{\ell}$ be an expensive path in $\mathcal{X}$. Then the sum of weights of the blue edges in this path is exactly $\Delta(nk+1)$ if the path is canonical, and at least $\Delta(nk+1) + \Delta$ otherwise.
\end{lemma}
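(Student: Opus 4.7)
The plan is to decompose the expensive path $\pi$ from $c_j$ to $d_\ell$ into maximal grid sub-paths and bound the sum of blue edge weights on each segment. First I would restrict which intermediate special vertices $\pi$ can visit. Since $\pi$ uses no connector edge, the vertex $f_i$ is unreachable for every $i \geq 2$ (its only incoming edge is the connector $(e_i, f_i)$), and consequently so are $a_1, \ldots, a_{k-1}$ (the only incoming edge to $a_i$ is from $f_{i+1}$). Similarly $b_2, \ldots, b_k$ become dead-ends since leaving them forces the edge $(b_i, e_i)$ followed by a connector. Thus apart from its endpoints, $\pi$ can meet only $a_k$, $b_1$, other $c_{j'}$, or other $d_i$. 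Inspecting the remaining non-grid edges, the intermediate specials must occur in one of two paired patterns along the path: $(b_1, c_{j'})$ via $b_1 \to s \to c_{j'}$, or $(d_i, a_k)$ via $d_i \to t \to a_k$.

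Next, I would decompose $\pi$ into $m \geq 1$ grid sub-paths $\pi_1, \ldots, \pi_m$, each entering and exiting the grid via exactly two blue edges. By the pairing pattern above, each sub-path has one of three types: (A) $c_{j'} \leadsto d_{i'}$ with $i' \geq j'$; (B) $c_{j'} \leadsto b_1$; or (C) $a_k \leadsto d_{i'}$. A fourth conceivable type $a_k \leadsto b_1$ is infeasible since the grid's rightward/downward orientation forbids reaching gadget row $1$ from gadget row $k$. For type (A), with entry at local column $x$ and exit at local column $y$, the sum of the two blue weights equals $\Delta(nk + 1 + n(i' - j') + (y - x))$; grid monotonicity forces $n(i' - j') + (y - x) \geq 0$, so the sum is at least $\Delta(nk+1)$ with equality iff $i' = j'$ and $x = y$, in which case $\pi_i$ must be the vertical canonical path $Q^x_{j'}$. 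For types (B) and (C), minimizing the two blue weights over the feasible entry and exit positions yields sum at least $2\Delta$.

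Finally, I would combine the sub-paths: if $m = 1$, then $\pi$ itself is a single type (A) sub-path from $c_j$ to $d_\ell$, so the sum equals $\Delta(nk+1)$ exactly when $\pi$ is canonical and is at least $\Delta(nk+1) + \Delta$ otherwise. If $m \geq 2$, the first sub-path is of type (A) or (B) and the last is of type (A) or (C); the only allowed bookend combinations are (B, A) and (A, C), each already contributing at least $\Delta(nk+1) + 2\Delta$, and any further intermediate sub-paths only add to this total. Since $\Delta(nk+1) + 2\Delta > \Delta(nk+1) + \Delta$, the lemma follows. The main obstacle will be verifying the equality condition for type (A)—showing that $i' = j'$ and $x = y$ combined with grid monotonicity force $\pi_i$ to be exactly the vertical canonical path $Q^x_{j'}$—and carefully ruling out any potentially cheap detour through the infeasible $a_k \leadsto b_1$ case.
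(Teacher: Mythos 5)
Your proof is correct, and the core arithmetic in your $m=1$ / type (A) case is exactly the paper's proof: write the entry and exit blue weights as $\Delta(nk-nj'+n+1-x)$ and $\Delta(ni'-n+y)$, add them to get $\Delta(nk+1) + \Delta\bigl(n(i'-j') + (y-x)\bigr)$, and invoke rightward/downward monotonicity to conclude the second term is nonnegative, zero exactly when $i'=j'$ and $y=x$ (hence canonical), and at least $\Delta$ otherwise. The paper phrases this as two cases ($\ell = j$ with $r' \ge r$, and $\ell > j$), but it is the same computation.

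Where you diverge is in building the decomposition into $m \ge 1$ grid segments and analyzing types (B) and (C). The paper does not do this: it implicitly treats an expensive path as a \emph{simple} path from $s$ to $t$, and under that (WLOG) assumption a $c_j \leadsto d_\ell$ path cannot revisit $s$ or $t$, hence cannot reach $b_1$ (whose only exit is $b_1 \to s$) or $a_k$ (whose only entrance is $t \to a_k$) as an interior vertex, hence has exactly two blue edges and $m = 1$. In fact, this simplicity assumption is what makes the paper's inequality $r' \ge r$ valid in the first place---for a multi-segment walk the exit blue edge at $d_j$ is determined by the \emph{last} grid traversal, not the first, and $r' \ge r$ could fail, though the extra blue edges at $b_1$ and re-entry would compensate, as you show. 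So your decomposition closes a small gap that the paper leaves implicit, at the cost of additional case analysis that is unnecessary once one notes that optimal solutions may be taken to consist of simple paths. Your bounds for types (B) and (C) ($\ge 2\Delta$ each) and your ruling out of $a_k \leadsto b_1$ are all correct.
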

\begin{proof}
Since expensive paths do not use connector edges, we have $\ell \geq j$. We consider two cases: either $\ell=j$ or $\ell>j$.

If $\ell =j$, then let the blue edges incident on $c_j, d_j$ be from the canonical paths $Q_{j}^{r}, Q_{j}^{r'}$ respectively. Since expensive paths do not use connector edges, we have $r' \geq r$. The weight of blue edges incident on $c_j$ from canonical path $Q_{j}^{r}$ is $\Delta(nk-nj+n+1-r)$ and the weight of the blue edge incident on $d_j$ from the canonical path $Q_{j}^{r'}$ is $\Delta(nj-n+r')$. Hence, the sum of weights of these edges is $\Delta(nk-nj+n+1-r)+ \Delta(nj-n+r') = \Delta(nk+1) + \Delta(r'-r)) \geq \Delta(nk+1)$. Note that if the path is canonical then $r'=r$ and the weight is exactly $\Delta(nk+1)$. Otherwise, if the path is not canonical then $r'>r$ and then the weight is $\Delta(nk+1) + \Delta(r'-r)) \geq \Delta(nk+1)+\Delta$

In the last case suppose $\ell>j$: so clearly the path is not canonical. The minimum weights of any blue edges incident on $c_j, d_{\ell}$ are $\Delta(nk-nj+1), \Delta(n\ell-n+1)$ respectively. Hence, the sum of weights of these edges is $\Delta(nk-nj+1) + \Delta(n\ell-n+1) = \Delta(nk+1) + \Delta + \Delta(n(\ell-j-1)) \geq \Delta(nk+1) + \Delta$.

\end{proof}

\begin{lemma}
\label{lem:blue} The weight of blue edges in $\mathcal{X}$ is at least $2k\cdot \Delta(nk+1)$.
\end{lemma}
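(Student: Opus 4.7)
The plan is to split the blue edges used by $\mathcal{X}$ into two groups that are provably disjoint, then lower-bound the weight contributed by each group separately.

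First, I would observe that every blue edge is, by construction, the very first or very last edge on some canonical path, so it is incident on exactly one of $\{a_i, b_i : i\in[k]\} \cup \{c_j, d_j : j\in[k]\}$. Because all edges inside the red gadgets are oriented downward/rightward, an \emph{expensive} $s\leadsto t$ path (which, by definition, uses no connector edge) enters at some $c_j$ and leaves at some $d_\ell$ with $\ell \geq j$, and can never reach $a_i$ or $b_i$ (reaching $b_i$ would force the next edge to be the connector $b_i\to e_i$, or $b_1\to s$, neither of which can occur on an expensive $s\leadsto t$ path). Symmetrically, the backward $t\leadsto s$ path enters at $a_k$ and exits at $b_1$, climbing levels only through connector edges, and cannot use any blue edge incident on $c_j$ or $d_j$. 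Consequently, the set of blue edges used by the backward path is disjoint from the set of blue edges used by the expensive forward paths, and moreover every blue edge used by either group contributes its full weight $\omega(e)$ to $\sum_e \omega(e)\phi(e)$ at least once.

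Next, I would lower-bound each group. For the backward group, Lemma~\ref{lem:t-s-path} guarantees that the $t\leadsto s$ path contains an $a_i\leadsto b_i$ sub-path at every level $i\in[k]$ that avoids connector edges, and Lemma~\ref{lem:no-s-t-share-blue-with-a-b} gives that the two blue edges (one incident on $a_i$, one on $b_i$) of this sub-path contribute at least $\Delta(nk+1)$. Summing over $i\in[k]$ and using disjointness of these sub-paths across levels yields a contribution of at least $k\cdot\Delta(nk+1)$. For the forward group, Lemma~\ref{lem:connector-edges} together with Equation~\eqref{eqn:sum-lambda-mu} provides at least $k$ expensive $s\leadsto t$ paths, and Lemma~\ref{lem:a-i-b-i} shows each one's blue edges sum to at least $\Delta(nk+1)$; the blue edges used by these expensive paths all sit on the $c_\cdot, d_\cdot$ side, so their weights add up (counted with multiplicity by $\phi$, which only strengthens the bound) to at least $k\cdot\Delta(nk+1)$.

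Adding the two disjoint contributions gives the desired lower bound of $2k\cdot\Delta(nk+1)$ on the total blue weight in $\mathcal{X}$. The main subtle point—and the step I would be most careful about in the write-up—is the disjointness argument, specifically ruling out that an expensive forward path can slip into an $a_i$- or $b_i$-blue edge; this is where the downward/rightward orientation of gadget edges and the fact that the only way to leave $b_i$ (for $i\ge 2$) is through a connector edge are essential. Once that is nailed down, the two lemma-invocations plus a summation finish the proof with no further calculation.
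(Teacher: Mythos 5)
Your proof is correct and follows essentially the same approach as the paper: you split the blue edges into those on the $t\leadsto s$ path (bounded via Lemmas~\ref{lem:t-s-path} and \ref{lem:no-s-t-share-blue-with-a-b}) and those on the expensive $s\leadsto t$ paths (bounded via Lemmas~\ref{lem:connector-edges} and \ref{lem:a-i-b-i}), then add. The only difference is that the paper dismisses disjointness with ``clearly disjoint,'' while you supply the explicit reason (blue edges on the $a_i,b_i$ side versus the $c_j,d_j$ side cannot be shared because of the edge orientations and the fact that leaving $b_i$ requires a connector or $b_1\to s$); this is a worthwhile clarification but not a different proof.
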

\begin{proof}
From Lemma~\ref{lem:no-s-t-share-blue-with-a-b}, we know that the sum of weights of blue edges incident on $a_i$ and $b_i$ on the $a_i \leadsto b_i$ path in $\mathcal{X}$ is at least $\Delta(nk+1)$ for each $i\in [k]$. From Lemma~\ref{lem:a-i-b-i}, we know that the sum of weights of the blue edges in any expensive path is at least $\Delta(nk+1)$. Moreover, Lemma~\ref{lem:connector-edges} implies that there are at least $k$ expensive paths.
Since all these edges are clearly disjoint, it follows that the total weight of blue edges is at least $2k\cdot \Delta(nk+1)$.
\end{proof}

\begin{lemma}
\label{lem:black} The weight of black edges in $\mathcal{X}$ is at least $2k\Big(4(k+1)+4k(n-1)\Big)$, without considering the savings via orange and green edges (see Figure~\ref{fig:savings-new}).
\end{lemma}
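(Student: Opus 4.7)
The plan is to lower-bound horizontal and vertical black-edge weight separately, then combine these using that in the underlying graph horizontal edges and vertical edges are distinct, and the only device that would allow a vertical canonical path to share black-edge weight with a horizontal canonical path is the green/orange shortcut subdivision, which the lemma tells us to ignore.

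For the horizontal contribution I would invoke Lemma~\ref{lem:t-s-path} to extract, for each $i\in[k]$, an $a_i\leadsto b_i$ sub-path $B^{(i)}$ of the backward path that uses no connector edge. Since reaching a smaller-indexed level would require a connector and gadget edges are oriented only down and right, each $B^{(i)}$ stays inside level $i$ and traverses $G_{i,1},\ldots,G_{i,k}$ left to right. The horizontal displacement from $a_i$ to $b_i$ spans the full width of level $i$, so $B^{(i)}$ uses at least $n-1$ horizontal black edges inside each gadget and $k+1$ horizontal black edges outside the gadgets (for entry into $G_{i,1}$, exit from $G_{i,k}$, and the $k-1$ inter-gadget transitions), contributing horizontal weight at least $4(k+1)+4k(n-1)$. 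Because distinct $B^{(i)}$'s lie in disjoint levels, their horizontal edges are pairwise disjoint, giving total horizontal black-edge weight at least $k\cdot\bigl[4(k+1)+4k(n-1)\bigr]$.

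For the vertical contribution I would apply Lemma~\ref{lem:connector-edges} to obtain at least $k$ expensive forward paths, each a $c_j\leadsto d_\ell$ path with $\ell\geq j$ avoiding connectors. The key symmetric observation is that the total vertical descent from $c_j$ to $d_\ell$ is geometrically invariant: no matter how the path zig-zags between columns, it must bridge the full height of the graph, so it uses at least $k(n-1)$ vertical black edges inside gadgets and $k+1$ vertical black edges outside, i.e.\ vertical weight at least $4(k+1)+4k(n-1)$. Since the forward paths are edge-disjoint by Remark~\ref{remark:disjointness-of-forward-paths}, the vertical black edges used by distinct expensive paths do not overlap, contributing total vertical black-edge weight at least $k\cdot\bigl[4(k+1)+4k(n-1)\bigr]$.

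Adding the two bounds yields total black-edge weight at least $2k\cdot\bigl[4(k+1)+4k(n-1)\bigr]$, as required. The main obstacle I anticipate is making the displacement-invariance argument for non-canonical expensive paths fully rigorous: one has to carefully unpack the inter-gadget and intra-gadget vertical edges and verify that column-shifting via horizontal edges never reduces the total vertical black-edge count, and then observe the symmetric statement for backward sub-paths that enter and exit adjacent gadgets at possibly different rows.
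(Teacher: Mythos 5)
Your proof matches the paper's almost exactly: lower-bound the horizontal black-edge weight via the $k$ level-disjoint $a_i\leadsto b_i$ sub-paths of the backward path (Lemma~\ref{lem:t-s-path}), lower-bound the vertical black-edge weight via the $\geq k$ expensive forward paths (Lemma~\ref{lem:connector-edges}), and combine since horizontal and vertical black edges are disjoint sets. The only cosmetic difference is that you justify counting vertical edges once per expensive path via Remark~\ref{remark:disjointness-of-forward-paths}, while the paper appeals directly to the cost function $\phi$, which counts forward uses with multiplicity -- these are equivalent.
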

\begin{proof}

From Lemma~\ref{lem:t-s-path}, we know that for each $i\in [k]$ there is an $a_i \leadsto b_i$ path in $\mathcal{X}$ which does not include any connector edge. Hence, the edges of this $a_i \leadsto b_i$ paths are contained in the gadgets of level $i$. Hence, we need to at least buy the set of horizontally right  black edges which take us from $a_i$ to $b_i$. These black edges have weight $4(k+1)+4k(n-1)$. Since the edges of the $a_i \leadsto b_i$ paths are contained in the gadgets of level $i$ and the sets of horizontally right black edges in gadgets of different levels are disjoint, the total weight of horizontally right black edges is at least $k\Big(4(k+1)+4k(n-1)\Big)$.

Similarly, let $c_{j}\leadsto d_{\ell}$ be an expensive path for some $\ell\geq j$. Again, we need to at least buy at least the set of vertically downward black edges which take us from $c_j$ to $d_{\ell}$. These vertically downward black edges have total weight $4(k+1)+4k(n-1)$. Even though two expensive paths may use the same vertically downward edges, they are both to be used in $s\leadsto t$ paths and hence we must pay for them each time. By Lemma~\ref{lem:connector-edges}, there are at least $k$ expensive paths and hence the total weight of the vertically downward black edges is at least $k\Big(4(k+1)+4k(n-1)\Big)$.

Combining the two observations above, it follows that the total weight of black edges (horizontally right and vertically downward) in $\mathcal{X}$ is at least $2k\Big(4(k+1)+4k(n-1)\Big)$, without considering the savings via orange and green edges (see Figure~\ref{fig:savings-new}).
\end{proof}

\begin{lemma}
\label{lem:canonical-lambda} Every expensive path in $\mathcal{X}$ is canonical, i.e., $\mu_j = 0$ for all $j\in [k]$.
\end{lemma}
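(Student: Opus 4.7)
The plan is to argue by contradiction: suppose some $\mu_{j_0}>0$, so $\mathcal{X}$ contains at least one non-canonical expensive path, and derive that $\mathrm{cost}(\mathcal{X}) > \beta$, contradicting the working assumption. To this end I would lower-bound $\mathrm{cost}(\mathcal{X})$ by partitioning its edges into three pairwise-disjoint groups and charging each separately: (i) the $k-1$ connector edges, (ii) the blue edges (those of weight a multiple of $\Delta$ incident to the special vertices $\{a_i,b_i,c_j,d_j\}$), and (iii) the black edges inside the gadgets together with the green/orange shortcut edges.

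For group (i), Lemma~\ref{lem:t-s-path} forces the $t\leadsto s$ path to traverse all $k-1$ connector edges, contributing $W(k-1)$. For group (ii), Lemma~\ref{lem:no-s-t-share-blue-with-a-b} contributes $k\cdot\Delta(nk+1)$ from the $k$ horizontal $a_i\leadsto b_i$ sub-paths inside the $t\leadsto s$ path; separately, Lemma~\ref{lem:connector-edges} supplies at least $k$ expensive paths, and by Lemma~\ref{lem:a-i-b-i} each canonical one contributes exactly $\Delta(nk+1)$ in blue weight while each non-canonical one contributes at least $\Delta(nk+1)+\Delta$. The two families of blue edges are incident to disjoint terminal sets, and the blue edges of distinct expensive paths are charged separately thanks to Remark~\ref{remark:disjointness-of-forward-paths}, so the contributions add to at least $2k\Delta(nk+1) + \bigl(\sum_{j\in[k]}\mu_j\bigr)\Delta$.

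For group (iii), Lemma~\ref{lem:black} gives a lower bound of $2k\bigl(4(k+1)+4k(n-1)\bigr)$ before savings. By construction (Figure~\ref{fig:savings-new}), every shortcut saving is enjoyed by the $t\leadsto s$ path and amounts to at most $2$ inside each of the $k$ symmetric gadgets and at most $1$ inside each of the $k(k-1)$ asymmetric gadgets, so the total savings are bounded by $2k+k(k-1)=k^{2}+k$. Crucially, the shortcuts involve only black and newly-added green/orange edges inside the gadgets and never touch a blue edge, so the bound on group (ii) is unaffected.

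Summing the three bounds and using the definitions of $\alpha$ and $\beta$ in Equations~\ref{eqn:alpha} and~\ref{eqn:beta} yields
\[
\mathrm{cost}(\mathcal{X}) \;\geq\; W(k-1)+2k\Delta(nk+1)+2k\bigl(4(k+1)+4k(n-1)\bigr)-(k^{2}+k)+\Bigl(\sum_{j\in[k]}\mu_j\Bigr)\Delta \;=\; \beta+\Bigl(\sum_{j\in[k]}\mu_j\Bigr)\Delta.
\]
Since $\Delta=7n^{6}>0$, any $\mu_j>0$ contradicts $\mathrm{cost}(\mathcal{X})\leq\beta$, so $\mu_j=0$ for every $j\in[k]$. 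The main obstacle is the bookkeeping that keeps the three edge groups disjoint and that cleanly separates the expensive paths' blue contributions from those of the $t\leadsto s$ path; once this is set up, a single non-canonical expensive path contributes an unabsorbable surplus of $\Delta$, which immediately breaks the budget.
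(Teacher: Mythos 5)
Your proof has a genuine gap in the bound on shortcut savings. You assert that the total savings from the green/orange shortcuts are bounded by $k^{2}+k$, reasoning that each symmetric gadget contributes at most $2$ and each asymmetric gadget at most $1$. But this tight bound is precisely what is \emph{not} available at this stage of the argument: before we know the $a_i\leadsto b_i$ sub-path of the $t\leadsto s$ path is almost-canonical (Lemma~\ref{lem:min-wt}) and that there are exactly $k$ canonical expensive paths (this lemma plus Lemma~\ref{lem:sum-lambdas}), a horizontal traversal of a gadget can snake downward and hit several green or orange vertices in succession, and more than one vertical expensive path can pass through the same gadget. Both effects can drive the savings per gadget above $2$ (resp.\ $1$). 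The paper does eventually prove the $k^{2}+k$ savings bound, but only in the proof of Theorem~\ref{thm:dsf-redn-hard}, after Lemmas~\ref{lem:canonical-lambda}--\ref{lem:min-wt} have pinned down the structure of $\mathcal{X}$. Invoking it inside the proof of Lemma~\ref{lem:canonical-lambda} is therefore circular.

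The paper's actual proof avoids this by using a deliberately crude savings bound: each shortcut can be used at most $\binom{2k}{2}$ times (once per pair of paths meeting there), there are at most $kn$ green and $(n^{2}-n)(k^{2}-k)$ orange shortcut edges, so the total savings are at most $\binom{2k}{2}\bigl(2kn+(n^{2}-n)(k^{2}-k)\bigr)\leq 6n^{6}$. The parameter $\Delta=7n^{6}$ is chosen \emph{exactly} so that the extra $\Delta$ of blue weight forced by a single non-canonical expensive path (via Lemma~\ref{lem:a-i-b-i}) already exceeds this worst-case saving, pushing the cost strictly above $\beta$. Your cleaner inequality $\mathrm{cost}(\mathcal{X})\geq\beta+\bigl(\sum_{j}\mu_j\bigr)\Delta$ is the statement one would \emph{like} to have, but proving it requires the very structural control this lemma is supposed to establish; you should replace the $k^{2}+k$ savings bound with the crude $6n^{6}$ bound and use $\Delta-6n^{6}=n^{6}>0$ as the paper does.
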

\begin{proof}
Suppose an expensive path is not canonical. Lemma~\ref{lem:a-i-b-i} implies that the contribution of the blue edges of this expensive path is $\geq \Delta(nk+1)+\Delta$. By an argument exactly similar to that of Lemma~\ref{lem:blue}, it follows that the contribution of the blue edges to the weight of $\mathcal{X}$ is at least $2k\cdot \Delta(nk+1) + \Delta$.

Refer to Figure~\ref{fig:savings-new}. Note that we can use each shortcut at most $\binom{2k}{2}$ times, once for each pair of paths that will meet at the orange or green vertex (note that there are total $2k$ paths ). There are $k\cdot n$ green edges ($n$ in each of the $k$ symmetric gadgets). Since each green shortcut can save a weight of 2, we can save at most $2k\cdot n$ from the green edges. Note that in the asymmetric gadgets, there are no shortcuts along the diagonal. Hence, an asymmetric gadget can have at most $(n^{2}-n)$ orange edges. There are $(k^{2}-k)$ asymmetric gadgets and we can save a weight of 1 from each orange edge. So, we can save at most $(n^{2}-n)(k^{2}-k)$ from the orange edges. Hence, total maximum saving is at most
\begin{align*}
\binom{2k}{2}\Big(2k\cdot n + (n^{2}-n)(k^{2}-k)\Big)
&= k(2k-1)\Big(2k\cdot n + (n^{2}-n)(k^{2}-k)\Big) \\
&\leq 2n^2\cdot (2n^2 + n^4) & \text{[Since $k\leq n$]} \\
&\leq 6n^{6}
\end{align*}
We now claim that the weight of our solution already exceeds $\beta$, even if we allow this maximum possible saving. Recall that we have weight of $W(k-1)$ from the connector edges. Hence, the weight of $\mathcal{X}$ is at least
\begin{align*}
\text{weight of}\ \mathcal{X} &\geq  W(k-1) + \Big(2k\cdot \Delta(nk+1) + \Delta \Big)+ 2k\Big(4(k+1)+4k(n-1)\Big) - 6n^6 \\
&= W(k-1) + 2k\cdot \Delta(nk+1) + 2k\Big(4(k+1)+4k(n-1)\Big) + \Big(\Delta - 6n^6\Big) \\
&= W(k-1) + 2k\cdot \Delta(nk+1) + 2k\Big(4(k+1)+4k(n-1)\Big) + n^6 \quad \text{[Since $\Delta=7n^{6}$]} \\
&> W(k-1) + 2k\cdot \Delta(nk+1) + 2k\Big(4(k+1)+4k(n-1)\Big) \\
&> W(k-1) + 2k\cdot \Delta(nk+1) + 2k(4(k+1)+4k(n-1)) - (k^{2}-k) \\
&= \beta \quad \text{[From Equation~\ref{eqn:beta}]}
\end{align*}
Contradiction.
\end{proof}

\begin{lemma}
\label{lem:sum-lambdas}
$\sum_{i=1}^{k} \lambda_i = k$
\end{lemma}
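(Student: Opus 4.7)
The plan is to prove equality by matching a trivial lower bound with an upper bound argued by contradiction. The lower bound $\sum_{i=1}^{k} \lambda_i \geq k$ is essentially free: Equation~\ref{eqn:sum-lambda-mu} gives $\sum_{i=1}^{k} (\lambda_i + \mu_i) \geq k$, and Lemma~\ref{lem:canonical-lambda} tells us $\mu_i = 0$ for every $i$, so $\sum_{i=1}^{k} \lambda_i \geq k$. The substantive content is therefore the upper bound $\sum_{i=1}^{k} \lambda_i \leq k$, which I would establish by assuming for contradiction that there are at least $k+1$ expensive paths and showing that this forces the weight of $\mathcal{X}$ to strictly exceed $\beta$.

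Under that assumption, I would re-run the edge-counting of Lemmas~\ref{lem:blue} and~\ref{lem:black} with $r := \sum_{i=1}^{k} \lambda_i \geq k+1$ canonical expensive paths rather than just $k$. On the blue side, Lemma~\ref{lem:no-s-t-share-blue-with-a-b} contributes at least $k\Delta(nk+1)$ from the $t\leadsto s$ path, and by Lemma~\ref{lem:a-i-b-i} each of the $r$ canonical expensive paths contributes exactly $\Delta(nk+1)$; all these blue edges are disjoint, so the blue total is at least $(k+r)\Delta(nk+1) \geq (2k+1)\Delta(nk+1)$. The same argument as in Lemma~\ref{lem:black}, with the expensive paths now numbering at least $k+1$ instead of $k$, yields a black-edge contribution of at least $(2k+1)\bigl(4(k+1)+4k(n-1)\bigr)$, before any savings from green and orange shortcuts. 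Adding the $W(k-1)$ from the connector edges (Lemma~\ref{lem:t-s-path}), and subtracting the global savings ceiling of $6n^{6}$ computed inside the proof of Lemma~\ref{lem:canonical-lambda}, the weight of $\mathcal{X}$ is at least
\[
(2k+1)\alpha + W(k-1) - 6n^{6}.
\]

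The final step is to compare with $\beta = 2k\alpha + W(k-1) - (k^{2}+k)$ from Equation~\ref{eqn:beta}: the surplus is $\alpha - 6n^{6} + (k^{2}+k)$, and because $\alpha \geq \Delta(nk+1) = 7n^{6}(nk+1) \geq 7n^{7}$ while the savings are bounded by $6n^{6}$, this surplus is strictly positive. Hence the weight of $\mathcal{X}$ strictly exceeds $\beta$, contradicting the standing hypothesis that $\mathcal{X}$ has cost $\leq \beta$. The only mildly delicate point to watch, and the main place to be careful, is bookkeeping: making sure the blue and black lower bounds genuinely upgrade from $2k$ copies to $(2k+1)$ copies when $r$ increases by one (which relies on the disjointness of blue edges across canonical paths and on the fact that expensive $s\leadsto t$ paths pay for their own vertical black spines independently), and that the $6n^{6}$ savings bound from Lemma~\ref{lem:canonical-lambda} remains the correct cap since adding more canonical paths does not enlarge the set of available shortcuts.
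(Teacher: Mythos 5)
Your proof is correct and follows essentially the same route as the paper: obtain the lower bound $\sum_i \lambda_i \geq k$ from Equation~\ref{eqn:sum-lambda-mu} together with Lemma~\ref{lem:canonical-lambda}, then assume at least $k+1$ expensive paths and show the blue-edge cost alone jumps by enough to push the total past $\beta$ given the $6n^{6}$ savings cap. The paper is slightly more economical — it only tracks the extra $\Delta(nk+1)\geq\Delta$ of blue weight and then cites the arithmetic from the proof of Lemma~\ref{lem:canonical-lambda} — whereas you also upgrade the black-edge count to $(2k+1)$ copies and compare $(2k+1)\alpha + W(k-1) - 6n^6$ against $\beta$ directly; both computations close the contradiction, and your bookkeeping (disjointness of blue edges between the $a_i\leadsto b_i$ and $c_j\leadsto d_j$ families, paying per forward use, and the fact that the $\binom{2k}{2}$-based savings cap does not grow) is sound.
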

\begin{proof}
From Equation~\ref{eqn:sum-lambda-mu} and Lemma~\ref{lem:canonical-lambda}, it follows that $\sum_{i=1}^{k} \lambda_i \geq k$. Suppose $\sum_{i=1}^{k} \lambda_i > k$, i.e., there are at least $k+1$ expensive paths. We follow a line of argument similar to that in proof of Lemma~\ref{lem:blue}. Note that the blue edges incident used in an expensive path are not used in the $t\leadsto s$ path in $\mathcal{X}$, and hence it follows that the total cost of the blue edges from expensive paths is at least $(k+1)\cdot \Delta(nk+1)$. Hence the total weight of the blue edges is at least $k\cdot \Delta(nk+1) + (k+1)\cdot \Delta(nk+1) = 2k\cdot \Delta(nk+1) + \Delta(nk+1) \geq 2k\cdot \Delta(nk+1) + \Delta$. Now an argument similar to that of Lemma~\ref{lem:canonical-lambda} shows that the weight of $\mathcal{X}$ exceeds $\beta$, which is a contradiction.
\end{proof}

Note the shortcuts described in Figure~\ref{fig:savings-new} again bring the $a_i\leadsto b_i$ path back to the same horizontal
canonical path.
\begin{definition}
\label{defn-almost-canonical} We call an $a_i\leadsto b_i$ path as an \emph{almost-canonical} path if it is basically a
canonical path, but can additionally take the small detour given by the green or orange edges in Figure~\ref{fig:savings-new}. An almost-canonical path must however end on the same horizontal level on which it began.
\end{definition}

\begin{lemma}
\label{lem:min-wt}
$\mathcal{X}$ contains exactly one $a_i \leadsto b_i$ path for each $i\in [k]$. Moreover, this path is almost-canonical.
\end{lemma}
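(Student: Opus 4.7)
The plan is to combine Lemma~\ref{lem:t-s-path} with a tight blue-edge budget argument that exploits the large gap $\Delta = 7n^6$ between the weight of any single blue edge and the total savings $\leq 6n^6$ available from green/orange shortcuts (as derived inside the proof of Lemma~\ref{lem:canonical-lambda}). Lemma~\ref{lem:t-s-path} already hands us, for each $i\in[k]$, a distinguished $a_i\leadsto b_i$ sub-path $R_i$ of the unique $t\leadsto s$ path, and because connector edges are the only way between levels and all $k-1$ connectors are used exactly once, each $R_i$ is uniquely determined.

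To show $R_i$ is almost-canonical, let $\ell_i,\ell'_i$ denote the rows on which the first and last blue edges of $R_i$ lie. Since every gadget edge points right or down, $\ell'_i\geq \ell_i$, and the two blue edges of $R_i$ contribute weight $\Delta(nk+1)+\Delta(\ell'_i-\ell_i)$ by the formulas preceding Lemma~\ref{lem:no-s-t-share-blue-with-a-b}. Summing over $i$ and incorporating the $k\cdot\Delta(nk+1)$ blue contribution guaranteed by Lemmas~\ref{lem:canonical-lambda} and \ref{lem:sum-lambdas} (the expensive paths are canonical $c_j\leadsto d_j$ paths), adding the mandatory connector cost $W(k-1)$ and the black-edge lower bound of Lemma~\ref{lem:black}, and subtracting the savings cap $6n^6$ from Lemma~\ref{lem:canonical-lambda}'s proof, I obtain a lower bound of the form $W(k-1)+2k\alpha+\Delta\sum_i(\ell'_i-\ell_i)-6n^6$ on $\mathrm{weight}(\mathcal{X})$. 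Since $\Delta=7n^6>6n^6+(k^2+k)$, even a single index with $\ell'_i>\ell_i$ already forces $\mathrm{weight}(\mathcal{X})>\beta$, contradicting the assumption $\mathrm{weight}(\mathcal{X})\leq\beta$. Therefore $\ell_i=\ell'_i=:\tau_i$ for every $i$.

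With both endpoints of $R_i$ sitting on row $\tau_i$ and no edge pointing upward, $R_i$ must remain on row $\tau_i$ for its entire length: any step leaving row $\tau_i$ could never return. The only detours that depart from and return to row $\tau_i$ are precisely the subdivision triangles $u\to q\to r$ in Figure~\ref{fig:savings-new}. Hence $R_i$ equals $P_i^{\tau_i}$ with some subset of green/orange shortcuts substituted in, which is exactly the definition of almost-canonical.

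For uniqueness, suppose $\mathcal{X}$ contained a second $a_i\leadsto b_i$ path $R'_i\neq R_i$. Any such path must start with a blue edge at $a_i$ and end with a blue edge at $b_i$. If the blue edges of $R'_i$ differ from those of $R_i$ then $R'_i$ contributes an extra $\geq \Delta$ to the blue weight, blowing the budget by the same inequality as above. Hence $R'_i$ shares both blue endpoints with $R_i$, so its endpoints also sit on row $\tau_i$; the same right/down argument forces $R'_i$ to remain on row $\tau_i$, and since the only horizontal edges at that row come from $R_i$ itself (expensive paths are vertical, and as in Lemma~\ref{lem:connector-edges} each cheap path uses exactly one connector edge and therefore cannot reach any $b_r$), we conclude $R'_i=R_i$. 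The main obstacle will be the bookkeeping in the middle paragraph, in particular verifying that the $6n^6$ savings cap applies verbatim here so that $\Delta$ comfortably dominates both it and the slack $k^2+k$.
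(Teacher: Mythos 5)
Your overall strategy is the same as the paper's: show the blue-edge cost of a non-almost-canonical $a_i\leadsto b_i$ segment exceeds the $6n^6$ savings cap from Lemma~\ref{lem:canonical-lambda}'s proof, and invoke the total blue/black/connector lower bounds to contradict $\mathrm{weight}(\mathcal{X})\leq\beta$. The almost-canonicality half is correct and in fact more carefully written out than the paper's version, which just gestures at ``a similar argument as in Lemma~\ref{lem:canonical-lambda}.'' Your bookkeeping checks out: the difference $\mathrm{weight}(\mathcal{X})-\beta$ is at least $\Delta\sum_i(\ell_i'-\ell_i)-6n^6+(k^2+k)$, which is positive the moment any $\ell_i'>\ell_i$.

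The gap is in the uniqueness half, specifically the branch where $R'_i$ shares both blue endpoints with $R_i$. You assert ``as in Lemma~\ref{lem:connector-edges} each cheap path uses exactly one connector edge,'' but Lemma~\ref{lem:connector-edges} proves nothing of the sort: it only shows that at least $k$ of the $2k-1$ $s\leadsto t$ paths use \emph{no} connector. To conclude that each cheap path uses \emph{exactly one} connector (and hence never reaches any $b_r$), you need to chain Lemmas~\ref{lem:canonical-lambda} and \ref{lem:sum-lambdas} to pin the number of expensive paths at exactly $k$ (so the number of cheap paths is exactly $k-1$), then run a pigeonhole argument over the $k-1$ connectors noting that any connector hit by two forward paths already costs $W(k-1)+W=Wk>\beta$. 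None of that is in your write-up. The paper's own treatment of uniqueness sidesteps this entirely: it argues directly that two distinct $a_i\leadsto b_i$ paths must use different blue edges (``otherwise we get the same almost-canonical path''), at which point the extra blue weight alone kills the budget --- so it never needs to reason about cheap paths at all. Either adopt that shorter route, or supply the missing chain; as written, the citation to Lemma~\ref{lem:connector-edges} does not support the claim.
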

\begin{proof}
Fix some $i\in [k]$. From Lemma~\ref{lem:t-s-path}, we know that $\mathcal{X}$ contains an $a_i \leadsto b_i$ path which does not include any connector edge, i.e., this path is completely contained in the gadgets of level $i$.
Suppose to the contrary that the $a_{i}\leadsto b_{i}$ path in $\mathcal{X}$ is not almost-canonical. From the orientation of the edges in the gadgets of level $i$ (rightwards and downwards), we know that there is a $a_{i}\leadsto b_i$ path in $\mathcal{X}$ that starts with the blue edge from $P_{i}^{\ell}$ and ends with a blue edge from $P_{i}^{\ell'}$ for some $\ell' > \ell$. Hence, the contribution of these blue edges is $\Delta(nk-ni+n+1-\ell) + \Delta(ni-n+\ell') = \Delta(nk+1) + \Delta (\ell'-\ell) \geq \Delta(nk+1) + \Delta$. Now, a similar argument as in Lemma~\ref{lem:canonical-lambda} can be applied to show that the weight of $\mathcal{X}$ is greater than $\beta$. Contradiction.

The above paragraph shows that each $a_i\leadsto b_i$ path in $\mathcal{X}$ is almost-canonical. Suppose there are at least two $a_i\leadsto b_i$ paths in $\mathcal{X}$. Then the blue edges incident on $a_i, b_i$ must be different (otherwise we get the same almost-canonical path). Therefore, the sum of blue edges incident on $a_i$ and $b_i$ is $\geq 2\Delta(nk+1)$. A similar argument to Lemma~\ref{lem:canonical-lambda} can be applied to show that the weight of $\mathcal{X}$ is greater than $\beta$. Contradiction.
\end{proof}

\begin{theorem}
\label{thm:dsf-redn-hard} If OPT for \blah is at most $\beta$ then the \textsc{Grid Tiling}* instance has a solution.
\end{theorem}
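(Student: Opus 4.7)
The plan is to combine all of the structural lemmas proved above with a careful accounting of the maximum possible savings from the green/orange shortcuts, and then read off a \textsc{Grid Tiling}* assignment from the unique almost-canonical horizontal paths and canonical vertical paths forced to appear in $\mathcal{X}$.

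First I would add up the lower bounds already established. Lemma~\ref{lem:t-s-path} contributes $W(k-1)$ from the connector edges in the $t\leadsto s$ path. Lemma~\ref{lem:blue} contributes $2k\cdot\Delta(nk+1)$ of blue edges (disjoint since they are incident on distinct $a_i, b_i, c_j, d_j$). Lemma~\ref{lem:black} contributes $2k\big(4(k+1)+4k(n-1)\big)$ of black edges before any savings. Recalling the definition of $\alpha$ in Equation~\ref{eqn:alpha} and of $\beta$ in Equation~\ref{eqn:beta}, this gives a baseline cost of $2k\alpha + W(k-1)$, so the total savings from the green/orange shortcuts must be at least $k^{2}+k$ in order for $\mathcal{X}$ to have weight at most $\beta$.

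Next I would bound the available savings. By Lemma~\ref{lem:min-wt} there is exactly one almost-canonical $a_i\leadsto b_i$ path for each $i\in[k]$; let $\epsilon_i\in[n]$ denote its horizontal row. By Lemma~\ref{lem:canonical-lambda} every expensive path is canonical and by Lemma~\ref{lem:sum-lambdas} there are exactly $k$ of them, so $\sum_i\lambda_i=k$. In any gadget $G_{i,j}$ the row-$\epsilon_i$ horizontal path and at most one canonical vertical path in column $\delta_j$ (where $\delta_j$ is the column of a $c_j\leadsto d_j$ canonical path, if present) meet in a single vertex, and a shortcut saves 2 at that vertex only if $i=j$ and that vertex is green, or 1 only if $i\neq j$ and that vertex is orange. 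Hence each symmetric gadget yields at most $2$ savings and each asymmetric gadget at most $1$, giving a global maximum of exactly $2k+(k^{2}-k)=k^{2}+k$. Since $\mathcal{X}$ must attain this bound, every gadget must contribute its maximum saving. In particular each gadget must house a vertical canonical path, which forces $\lambda_j\ge1$ for every $j$ and, together with $\sum_j\lambda_j=k$, forces $\lambda_j=1$ for each $j\in[k]$. Let $\delta_j\in[n]$ denote the unique column of the $c_j\leadsto d_j$ canonical path.

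Finally, I would define $\gamma_{i,j}=(\epsilon_i,\delta_j)$ for each $i,j\in[k]$ and verify the \textsc{Grid Tiling}* constraints. Maximum saving in $G_{i,i}$ forces $(\epsilon_i,\delta_i)$ to be the green vertex, i.e.~$\epsilon_i=\delta_i$, so $\gamma_{i,i}\in S_{i,i}$ (this is consistent with the \textsc{Grid Tiling}* structure of $S_{i,i}$). Maximum saving in every asymmetric $G_{i,j}$ forces $(\epsilon_i,\delta_j)$ to be an orange vertex, hence $\gamma_{i,j}=(\epsilon_i,\delta_j)\in S_{i,j}$. For the row constraint, $\gamma_{i,j}$ and $\gamma_{i,j+1}$ share first coordinate $\epsilon_i$; for the column constraint, $\gamma_{i,j}$ and $\gamma_{i+1,j}$ share second coordinate $\delta_j$. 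Thus $\{\gamma_{i,j}\}$ is a valid solution of the \textsc{Grid Tiling}* instance.

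The main obstacle is the savings accounting in the middle paragraph: one has to be careful that the detours permitted in Definition~\ref{defn-almost-canonical} keep the almost-canonical $a_i\leadsto b_i$ path on a single horizontal row $\epsilon_i$ throughout all $k$ gadgets of level $i$ (so that the meeting point in each $G_{i,j}$ is genuinely $(\epsilon_i,\delta_j)$), and that a single shortcut edge can only be charged once so the per-gadget saving really is capped at $2$ or $1$. Once this is in place, the match between the budget slack $k^{2}+k$ and the global saving maximum $k^{2}+k$ rigidly determines the intersection pattern, and the \textsc{Grid Tiling}* solution pops out.
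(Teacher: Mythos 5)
Your proof is correct and follows essentially the same route as the paper: invoke Lemmas~\ref{lem:t-s-path}, \ref{lem:blue}, \ref{lem:black}, \ref{lem:canonical-lambda}, \ref{lem:sum-lambdas}, and \ref{lem:min-wt} to pin the pre-shortcut cost at $\beta + (k^2+k)$, bound the achievable savings by $k^2+k$, and deduce that the bound is tight so that each gadget must realize its maximum saving, forcing $\lambda_i=1$ and $\epsilon_i=\delta_i$. The only cosmetic difference is that you cap savings gadget-by-gadget ($2$ per symmetric, $1$ per asymmetric) whereas the paper bounds orange savings per expensive path as $(k-1)\sum_i\lambda_i$ and green savings as $2\delta$ with $\delta\leq k$; both accountings yield the same $k(k-1)+2k$ ceiling and the same forcing argument.
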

\begin{proof}
By Lemma~\ref{lem:sum-lambdas}, we know that $\sum_{i=1}^{k} \lambda_i = k$ and $\lambda_i \geq 0$ for each $i\in [k]$. We now claim that $\lambda_i = 1$ for each $i\in [k]$. By Lemma~\ref{lem:canonical-lambda} and Lemma~\ref{lem:min-wt}, we know that $\mathcal{X}$ contains
\begin{itemize}
\item (Property 1): Exactly one $a_i\leadsto b_i$ (almost-canonical) path for every $i\in [k]$.
\item (Property 2): Exactly $k$ canonical expensive paths.
\end{itemize}
In addition, $\mathcal{X}$ contains $(k-1)$ connector edges. For the moment let us forget the shortcuts we added in Figure~\ref{fig:savings-new}. The weight of $\mathcal{X}$, without considering the shortcuts from Figure~\ref{fig:savings-new}, is equal to $W(k-1)+ 2k\Big(\Delta(nk+1)+4(k+1)+4k(n-1)\Big) = \beta + (k^{2}+k)$. Therefore, we must have a saving of $\geq (k^{2}+k)$ from the orange and green shortcuts.

By Lemma~\ref{lem:min-wt}, we know that for each $i\in [k]$ there is exactly one $a_i\leadsto b_i$ path in $\mathcal{X}$. Moreover, this path is almost-canonical. Recall that only the horizontal edges can save some weight (see Figure~\ref{fig:savings-new}). Therefore, we can use at most $k$ green edges (one for each symmetric gadget). Each canonical expensive path can use at most $(k-1)$ orange edges; once for each of the $(k-1)$ asymmetric gadgets that it encounters along the way. Suppose we use $\delta$ green edges. Then Property 1 and Property 2 above imply that $\delta\leq k$. Then the total saving is at most $(k-1)\Big(\sum_{i=1}^{k} \lambda_i) + 2\delta = k(k-1)+ 2\delta$. Since we want the total saving to be at least $k(k-1)+2k$, this forces $\delta\geq k$. But, we already know that $\delta\leq k$, and hence $\delta=k$. This forces that $\lambda_i = 1$ for each $i\in [k]$ as follows: If any $\lambda_i = 0$, then we cannot use the green edge in the symmetric gadget $G_{i,i}$ which contradicts $\delta=k$. If any $\lambda_i \geq 2$, then some other $\lambda_j = 0$ (since $\sum_{i=1}^{k} \lambda_i = k$) and we return to previous case. Therefore, the total saving is exactly $k(k-1)+2k$.

So, we have that for each $j\in [k]$, there is a canonical $c_{j}\leadsto d_{j}$ path in $\mathcal{X}$, say $Q_{j}^{\gamma_j}$. Further, $\mathcal{X}$ also contains an $a_i \leadsto b_i$ almost-canonical path for any $i\in [k]$, say $P_{i}^{\alpha_i}$. The fact that we have a saving of at least $k(k-1)+2k$ implies we have exactly one intersection in each symmetric gadget and each non-symmetric gadget. By construction of the gadgets, it follows that
\begin{itemize}
\item $\gamma_i= \alpha_i$ for each $i\in [k]$,
\item for each $1\leq i\neq j\leq k$ we have $(\alpha_i, \alpha_j)\in S_{i,j}$.
\end{itemize}
That is, the set of values $(\alpha_i, \alpha_j)\in S_{i,j}$ for each $1\leq i,j \leq k$ form a solution for the \textsc{Grid Tiling}* instance.
\end{proof}

\subsection{Proof of Theorem~\ref{thm:main-hardness}}

Finally, we are now ready to prove Theorem~\ref{thm:main-hardness} which is restated below:
\begin{reptheorem}{thm:main-hardness}
The \blah problem is W[1]-hard parameterized by $k$. Moroever, under the ETH, the \blah problem cannot be solved in $f(k)\cdot n^{o(k)}$ time for any function $f$ where $n$ is the number of vertices in the graph.
\end{reptheorem}
\begin{proof}
Theorem~\ref{thm:dsf-redn-easy} implies the W[1]-hardness by giving a reduction which transforms the problem of $k\times k$ \textsc{Grid Tiling}* into an
instance of \blaha where we want to find $2k-1$ paths from $s\leadsto t$ and one path from $t\leadsto s$.

Chen et al.~\cite{chen-hardness} showed for any computable function $f$, the existence of an $f(k)\cdot n^{o(k)}$ time algorithm for \textsc{Clique} implies ETH fails. Composing the reduction of~\cite{daniel-grid-tiling} from \textsc{Clique} to \textsc{Grid Tiling}*, along with our reduction from \textsc{Grid Tiling}* to \blaha, we obtain under ETH there is no $f(k)\cdot n^{o(k)}$ algorithm for \blah for any computable function $f$ since the parameter blowup is linear. This shows that the $n^{O(k)}$ algorithm for \blah given in Section~\ref{sec:algorithm} is asymptotically optimal.
\end{proof}

\section{Conclusions}
\label{sec:conclusions}

In this paper, for any $k\geq 1$ we studied the \blah problem and presented an algorithm which finds an optimum solution in $n^{O(k)}$ time. Moreover, we showed our algorithms is asymptotically optimal: under the ETH, the \blah problem does not admit an $f(k)\cdot n^{o(k)}$ time algorithm for any computable function $f$. This algorithm crucially used the fact that there always exists an optimal solution for \blah that has the reverse-compatibility property. However, we showed in Section~\ref{app:no-structure} that the \blahgeneral problem need not always have an optimal solution which satisfies the general-reverse-compatibility property when $\min\{k_1, k_2\}\geq 2$. Therefore, \blah is the most general problem that one can attempt to solve with our techniques. It remains an important challenging problem to find a similar structure and generalize our method to solve the \blahgeneral problem.

~\\

\noindent \textbf{Acknowledgements}: We would like to thank DIMACS for its hospitality where a subset of the authors had fruitful discussions on this problem.



%
%
%
%
%
%
%

\bibliographystyle{splncs03}
\bibliography{docsdb}


    \end{document}